
\documentclass[10pt]{IEEEtran}
\usepackage{amsfonts,amsmath,amssymb}
\usepackage{epsfig}
\usepackage{url}
\usepackage{psfrag}\usepackage{enumerate}
\newtheorem{thm}{Theorem}

\newtheorem{lem}{Lemma}

\newtheorem{defn}{Definition}
\newtheorem{example}{Example}
\newtheorem{remark}{Remark}

\newcommand{\Shv}{\hat{\mathbf{S}}}

\newcommand{\Lamt}{\tilde{\Lambda}}

\newcommand{\tx}{\mbox}{}
\newcommand{\nn}{\nonumber}


\usepackage{xspace}
\usepackage{bbm}
\usepackage{mathrsfs}

%
\def\tr{\mathop{\rm tr}\nolimits}%
\def\diag{\mathop{\rm diag}\nolimits}%
\def\rank{\mathop{\rm rank}\nolimits}%
%
%
%


{}

\newcommand{\Cc}{\mathcal{C}}

\newcommand{\Mc}{\mathcal{M}}
\newcommand{\Nc}{\mathcal{N}}


\newcommand{\Cr}{\mathscr{C}}


\newcommand{\1}{{\bf 1}}

\newcommand{\Yv}{{\bf Y}}
\newcommand{\Zv}{{\bf Z}}

\newcommand{\Sv}{{\bf S}}

\newcommand{\Mv}{{\bf M}}

\newcommand{\Thetav}{\mathbf{\Theta}}
\newcommand{\pen}{{P_e^{(n)}}}
\newcommand{\Dn}{D^{(n)}}




\newcommand{\Mh}{{\hat{M}}}
\newcommand{\Sh}{{\hat{S}}}

\newcommand{\mh}{{\hat{m}}}

\newcommand{\Thetah}{\hat{\Theta}}
\newcommand{\thetah}{\hat{\theta}}

\newcommand{\Bt}{{\tilde{B}}}

\newcommand{\Kt}{{\tilde{K}}}


\def\e{\epsilon}

\def\l{\lambda}


\DeclareMathOperator\E{\mathsf{E}}
\let\P\relax
\DeclareMathOperator\P{\mathsf{P}}




\newcommand{\N}{\mathrm{N}}
\newcommand{\U}{\mathrm{Unif}}


\def\textiid{i.i.d.\@\xspace}
\newcommand\iid{\ifmmode\text{ i.i.d. } \else \textiid \fi}

\newcommand{\Complex}{\mathbb{C}}
\newcommand{\Real}{\mathbb{R}}
\newcommand{\half}{\frac{1}{2}}


\interdisplaylinepenalty=2500


\begin{document}

\title{LQG Control Approach to Gaussian Broadcast Channels with Feedback}

\author{Ehsan Ardestanizadeh, Paolo Minero, and Massimo Franceschetti%
\thanks{Ehsan Ardestanizadeh and Massimo Franceschetti are with the Information Theory and Applications Center (ITA) of the California Institute of Telecommunications and Information Technologies (CALIT2), Department of Electrical and Computer Engineering, University of California, San Diego CA, 92093, USA. Email: ehsan@ucsd.edu, massimo@ece.ucsd.edu. Paolo Minero is with the Department of Electrical Engineering of the University of Notre Dame,  Notre Dame, IN, 46556, USA. Email: pminero@nd.edu }
\thanks{This work is partially supported by the National Science Foundation
award CNS-0546235 and CAREER award CCF-0747111.}
}

\maketitle
\IEEEpeerreviewmaketitle

\begin{abstract}
A code for communication over the $k$-receiver additive white Gaussian noise broadcast channel with feedback is presented and analyzed using tools from the theory of linear quadratic Gaussian optimal control. It is shown that the performance of this code depends on the noise correlation at the receivers and it is related to the solution of a discrete algebraic Riccati equation. For the case of independent noises, the sum rate achieved by the proposed code, satisfying average power constraint~$P$, is characterized as $1/2 \log (1+P\phi)$, where the coefficient $\phi \in  [1,k]$ quantifies the power gain due to the presence of feedback. When specialized to the case of two receivers, this includes a previous result by Elia and strictly improves upon the code of Ozarow and Leung. When the noises are correlated, the pre-log of the sum-capacity of the broadcast channel with feedback  can be strictly greater than one. 
It is established that for all noise covariance matrices of rank~$r$ the pre-log of the sum capacity is at most $k-r+1$ and, conversely, there exists a noise covariance matrix of rank~$r$ for which the proposed code achieves this upper bound. This generalizes a previous result by Gastpar and Wigger for the two-receiver broadcast channel. %

%

\end{abstract}

\section{Introduction}

Consider the communication problem over the $k$-receiver additive white Gaussian noise (AWGN) broadcast channel (BC) with feedback depicted in Fig.~\ref{fig:AWGNBC}. Here, a sender wishes to communicate $k$~independent messages to $k$~distinct receivers that observe the sequence of transmitted signals corrupted by~$k$, possibly correlated, AWGN sequences. It is known that the presence of feedback from receivers to sender improves communication performance over broadcast channels. Specifically, Dueck~\cite{Dueck} showed, by providing a specific example, that feedback can enlarge the capacity region of additive memoryless broadcast channels where the noises at the receivers are correlated, while Ozarow and Leung~\cite{Ozarow--Leung} proved that feedback can be beneficial by providing a means of cooperation between receivers and sender  even when the noises at the receivers are independent. However, a computable characterization of the capacity region of this channel remains a long-standing open problem. 

In this paper, we construct a code, which we refer to as  {\it LQG code}, for communicating over the AWGN-BC with feedback and we characterize its performance using tools from optimal control.  We show that the minimum power needed by the LQG code for reliable transmission of messages encoded at a given set of rates depends on the correlation between the noises at the receivers and is determined by the solution of a discrete algebraic Riccati equation (DARE) which arises in the analysis of the linear quadratic Gaussian (LQG) optimal control problem. The LQG code is then used to investigate some properties of the capacity region of the AWGN-BC with feedback. 


%


First, we consider the case of independent noises at the receivers, which is the most interesting in practice. 
By solving the corresponding DARE, it is shown that the LQG code achieves sum rate equal to $1/2 \log (1+P\phi(k,P))$ under average power constraint $P$ when messages are encoded at the same rate. Here, the real coefficient $\phi(k,P) \in [1,k]$ represents the power gain compared to the no-feedback sum capacity $1/2 \log (1+P)$, and for fixed $k$ it is increasing in $P$, that is, more power allows more gain. In particular, as  $P \to \infty$, $\phi \to k$, and the sum rate tends to $1/2 \log (1+kP)$, which is the same as the sum capacity of the single-input multiple-output (SIMO) channel. Note that in the SIMO channel the receivers are co-located, whereas in the AWGN-BC the receivers are separately located but feedback links to the transmitter are available. Hence, the power gain due to feedback can be interpreted as the amount of {\it cooperation} among the receivers established through feedback, which allows the transmitter to align the signals intended for different receivers coherently and use power more efficiently.

Next, we investigate how the sum capacity, the supremum of achievable sum rates, scales as the power $P$ at the transmitter increases. If the sum capacity scales as $(\gamma/2)\log (1+P)$ as $P \to \infty$, then we refer to $\gamma$ as the pre-log\footnote{The pre-log is also known as the number of degrees of freedom as it is equal to the number of orthogonal point-to-point channels with the same sum capacity.} of the channel. 
We show that the pre-log of the AWGN-BC with feedback depends on the rank of the correlation matrix of the noises at the receivers. Specifically, if the rank is $r$, then the pre-log can be at most $k-r+1$. Conversely, for any $r \in \{1,\ldots, k\}$, there exists a noise covariance matrix of rank~$r$ for which the upper bound on the pre-log is tight and is achieved by the LQG  code. In particular, the pre-log is equal to $k$ for some rank-one covariance matrix. This generalizes a previous result by Gastpar and Wigger~\cite{Wigger} for the two-receiver AWGN-BC to the case of $k$ receivers.








Finally, we wish to mention some additional related works. 
Elia~\cite{Elia2004} followed a control-theoretic approach and 
presented a linear code for the two-receiver AWGN-BC with independent noises, which outperforms the Ozarow--Leung (OL) code~\cite{Ozarow--Leung}. Our code, when specialized to the case of two receivers and independent noises, provides the same performance as Elia's code~\cite{Elia2004}. Wu et al.~\cite{Sriram} applied the LQG theory to study Gaussian networks with feedback, where the noises at the receivers are independent, but did not provide explicit solutions. Along the same lines, it has been shown in~\cite{EhsanControl} that the linear code proposed by Kramer~\cite{KramerFeedback} for the $k$-sender multiple access channel with feedback can be obtained by solving an LQG control problem.





 
It is worth noting that the LQG code is derived from an optimal control for a linear system and hence is optimal among the subclass of linear codes. For the AWGN-BC with feedback, we show that the LQG code provides better performance compared to the OL code for $k=2$ and hence outperforms the Kramer code~\cite{KramerFeedback} for $k \geq 3$, which is an extension of the OL code. However, it remains to be  proven whether the LQG code achieves the feedback sum capacity. 
%



The rest of the paper is organized as follows. Section~\ref{SectionBCdef} presents the problem definition. Section~\ref{SectionBCP2P} discusses the point-to-point communication problem over the AWGN channel with feedback from a control theoretic perspective. This viewpoint is then generalized in Section~\ref{SectionLQGcode}, which presents the LQG code for communicating over the $k$-user AWGN-BC with feedback. The following two sections are devoted to studying the performance of the LQG code: in Section~\ref{SectionBCsymmetric} we provide the analysis for the case of independent noises at the receivers, while in Section~\ref{SectionBCDoF} we characterize the pre-log gain when the noises are correlated. Finally, Section~\ref{SectionBCcon} concludes the paper.

\section{Definitions}\label{SectionBCdef}

Consider the communication problem where a sender wishes to communicate $k$ independent messages $M_1,\ldots, M_k$ to $k$ distinct receivers by $n$ transmissions over the AWGN-BC channel with feedback depicted in Fig.~\ref{fig:AWGNBC}. At each time $i \in \{1, \ldots, n\}$, the channel outputs are given by

\begin{align}
\Yv_{i}= \1 X_{i} + \Zv_{i} \label{BCYdef}
\end{align}
where $X_{i}$ is the transmitted symbol by the sender and $\1_{k \times 1}=(1,\ldots,1)^T$ is the column vector of ones of length $k$. The vector $\Yv_i :=(Y_{1i},\ldots,Y_{ki})^T$ contains the $k$ channel outputs at time~$i$, that is, $Y_{ji}$ is the channel output for the receiver $j$ at time~$i$. Similarly $Z_{ji}$ denotes the noise for the receiver $j$ at time $i$. The noise vector 
\[\Zv_i := (Z_{1i}, \ldots, Z_{ki})^T \sim \N(0,K_z) \ \tx{i.i.d.} \] 
is assumed to be independent of the transmitted messages, and independently and identically distributed (i.i.d.) from a Gaussian distribution with zero mean and covariance matrix~$K_z$. Without loss of generality, we assume that $Z_{ji} \sim \N(0,1)$ has unit variance, so the diagonal elements of $K_z$ are equal to one. 

We assume that the output symbols are causally and noiselessly fed back to the sender so that the transmitted symbol $X_{i}$ at time~$i$ can depend on the message vector $\Mv:=(M_1,\ldots,M_k)^T$, and the previous channel output vectors $\Yv^{i-1}:=(\Yv_{1},\ldots,\Yv_{i-1})$.

\begin{figure*}[htbp]
  \psfrag{e1}[b]{\hspace{1.5em}\small  $ \Mh_1$}
\psfrag{e2}[b]{\hspace{1.5em}\small  $\Mh_j$}
\psfrag{e3}[b]{\hspace{1em}\small  $\Mh_k$}
\psfrag{m}[b]{\hspace{-6 em} \small  $\Mv=(M_1,\ldots,M_k)$}

\psfrag{en}{\hspace{-.4em} { \small Encoder}}

\psfrag{de1}{\hspace{-.2em} { \small Decoder $1$ }}
\psfrag{de2}{\hspace{-.2em} { \small Decoder $j$ }}
\psfrag{de3}{\hspace{-.2em} { \small Decoder $k$}}

\psfrag{x}{\hspace{-.7em} \small $X_i$} 
\psfrag{y1}{\hspace{-.4em}\small $Y_{1i}$}
\psfrag{y2}{\hspace{-.4em}\small $Y_{ji}$}
\psfrag{y3}{\hspace{-.4em}\small $Y_{ki}$}
\psfrag{z1}{\hspace{-.2em}\small  $Z_{1i} $}
\psfrag{z2}{\hspace{-.2em}\small $Z_{ji}$}
\psfrag{z3}{\hspace{-.2em}\small  $Z_{ki}$}
\psfrag{f}{\hspace{-5em} { \small $\Yv^{i-1}=(Y^{i-1}_1, \ldots, Y^{i-1}_k)$}}
\psfrag{d}{\hspace{.5em} {\LARGE$\vdots$}}           
          
                      \centering
           \scalebox{1}{
  
           \includegraphics[width=4 in]{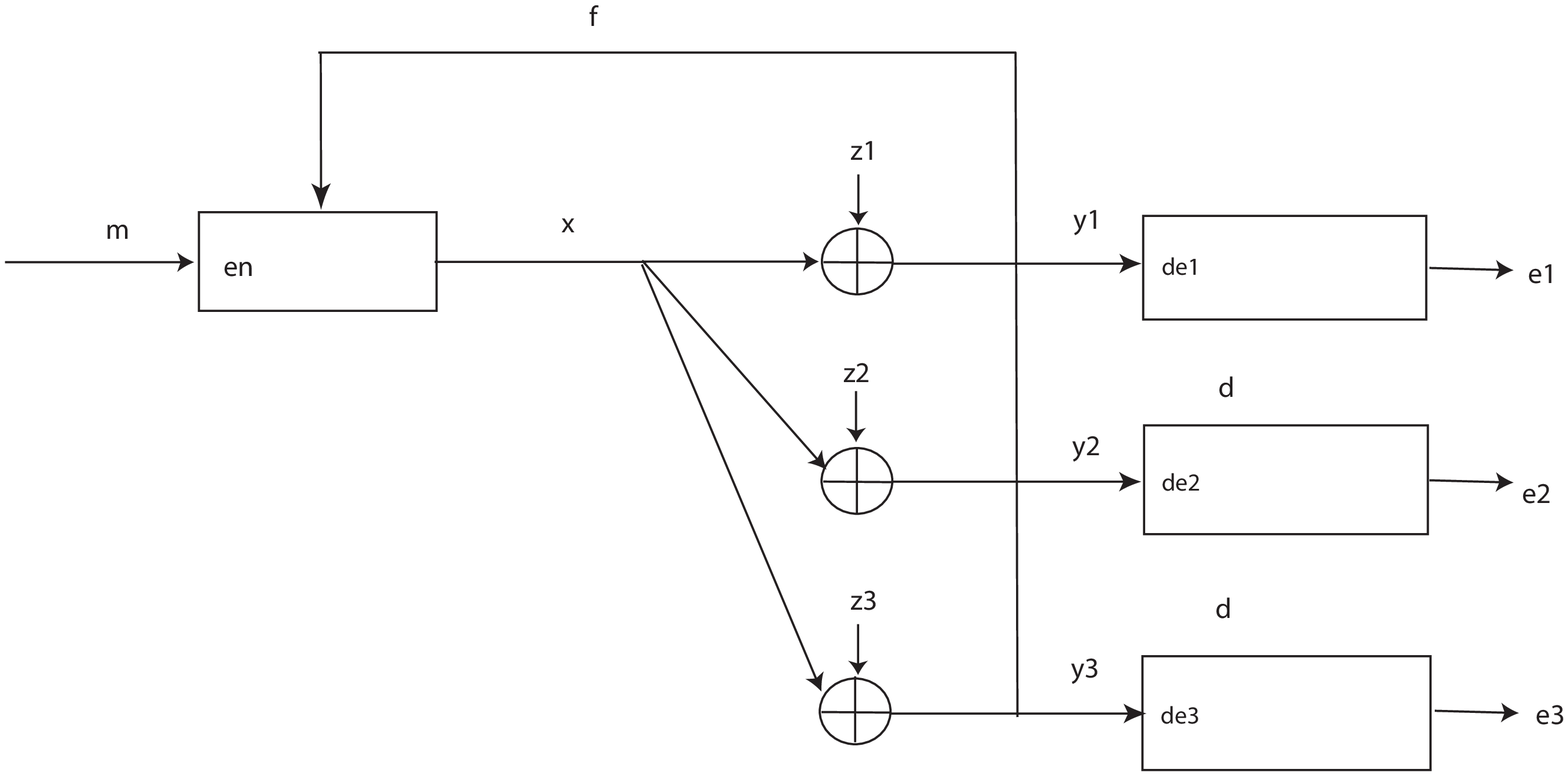}  }
          \vspace{0ex}
          \caption{The $k$-receiver AWGN broadcast channel with feedback.}
         \label{fig:AWGNBC}
\end{figure*}

\begin{defn}
A $(2^{nR_1},\ldots,2^{nR_k}, n)$ code for the AWGN-BC with feedback consists of
\begin{enumerate}
\item $k$ discrete messages $M_j \sim \U \{1,\ldots,2^{nR_j}\}$, 
\item an encoder that assigns a symbol $X_{i}(\Mv,\Yv^{i-1})$ to the message vector $\Mv$ and the previous channel output vectors $\Yv^{i-1}$ for each $i \in \{1,\ldots,n\}$, and
\item $k$ decoders, where decoder $j$ assigns an estimate $\Mh_j$ to each sequence $Y_j^n:=(Y_{j1},\ldots,Y_{jn})$.
\end{enumerate}
\end{defn}
Let $M_1,\ldots,M_k$ be independent and the probability of error be defined as
\begin{align*}
\pen = \P(M_j \neq \Mh_j \ \tx{for some} \ j ).
\end{align*}
Then, we say that  $(R_1, \ldots, R_k)$ is an achievable rate vector under (asymptotic block) power constraint $P$  if there exists a sequence of $(2^{nR_1},\ldots,2^{nR_k},n)$ codes such that 
\begin{align*}
\lim_{n\to \infty} P^{(n)}_{e} = 0
\end{align*}
and 
\begin{align}\label{avP}
\limsup_{n \to \infty} \frac{1}{n}\sum_{i=1}^n \E(X^2_{i}) \leq P.
\end{align}
We refer to $R =\sum_{j=1}^k R_j$ as the sum rate of an achievable rate vector. 
\begin{defn}
The closure of the set of achievable rate vectors $(R_1, \ldots, R_k)$ under power constraints $P$ is called the capacity region $\Cr(P,K_z)$. The sum capacity $C(P,K_z)$ is defined as
\[C(P,K_z) := \max \left\{ \sum_{j=1}^k R_j : (R_1, . . . ,R_k) \in \Cr(P,K_z) \right\} \]
and the pre-log $\gamma(K_z)$ is defined as
\begin{align*}
\gamma(K_z) = \limsup_{P \to \infty} \frac{ C(P,K_z) }{\frac{1}{2} \log (1 + P )}.
\end{align*}
\end{defn}

\begin{defn}
An $n$-code for the AWGN-BC with feedback consists of
\begin{enumerate}
\item $k$ continuous messages $\Theta_j \sim \U (0,1)$
\item an encoder that assigns a symbol $X_{i}(\Thetav,\Yv^{i-1})$ to the message vector $\Thetav=(\Theta_1,\ldots,\Theta_k)^T$ and the previous channel output vectors $\Yv^{i-1}$ for each $i \in \{1,\ldots,n\}$, and
\item $k$ decoders, where decoder $j$ assigns an estimate $\Thetah_j$ to each sequence $Y_j^n=(Y_{j1},\ldots,Y_{jn})$.
\end{enumerate}
\end{defn}
Let $\Theta_1,\ldots,\Theta_k$ be independent and the mean square errors $\Dn_1,\ldots, \Dn_k$ at time $n$ be defined as
\[\Dn_j =\E (\Theta_j-\Thetah_j)^2, \quad j=1,\ldots, k.\]
Then, we say that $(E_1,\ldots,E_k)$ is an achievable mean square error (MSE) exponent vector under (asymptotic block) power constraint $P$  if there exists a sequence of $n$-codes such that
\[E_j = \lim_{n \to \infty} -\frac{1}{2n}\log \Dn_j, \quad j=1,\ldots, k \]
and~\eqref{avP} holds.

The definitions of achievable MSE exponent and rate vectors are closely related, as established by the following lemma. 
\begin{lem}\label{lemMSE}
Let $(E_1,\ldots,E_k)$ be an achievable MSE exponent vector under power constraint $P$, and $(R_1,\ldots,R_k)$ be such that  
\begin{align*}
R_j< E_j, \ j=1,\ldots, k.
\end{align*}
Then, the rate vector $(R_1,\ldots,R_k)$ is achievable under power constraint $P$. 
\end{lem}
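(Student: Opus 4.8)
The plan is to convert a good MSE-estimator-based code into a message-transmission code by a standard quantization (nearest-codepoint) argument. Fix a rate vector $(R_1,\ldots,R_k)$ with $R_j<E_j$ for all $j$, and let $\{X_i(\Thetav,\Yv^{i-1})\}$ be a sequence of $n$-codes achieving the MSE exponent vector $(E_1,\ldots,E_k)$ under power constraint $P$, so that $D^{(n)}_j \le 2^{-2n(E_j-\eps)}$ for all large $n$ and any small $\eps>0$. The idea is to map each discrete message $M_j\in\{1,\ldots,2^{nR_j}\}$ to the midpoint $\theta_j(M_j)$ of the $M_j$-th subinterval in an equipartition of $(0,1)$ into $2^{nR_j}$ pieces, feed $\Thetav=(\theta_1(M_1),\ldots,\theta_k(M_k))$ into the $n$-code encoder, and at receiver $j$ apply the MSE decoder to get $\Thetah_j$, then round $\Thetah_j$ to the nearest message point $\theta_j(m)$ and declare that $m$ as $\Mh_j$.

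The key steps, in order: (i) Observe that the induced message distribution is uniform and that the power constraint~\eqref{avP} is inherited verbatim, since the encoder and the channel are unchanged — we are only restricting the support of $\Thetav$ to a finite grid, and the block-power bound holds for every realization argument / in expectation just as before (if the original bound is only in expectation over $\Thetav\sim\U(0,1)^k$, one first notes that the grid points are a subset of $(0,1)^k$ and, by a standard averaging/expurgation step, a sub-sequence of codes satisfies the per-message power bound; alternatively one restricts attention to codes for which~\eqref{avP} holds pointwise, which is the generic situation for linear LQG codes). (ii) A decoding error at receiver $j$ occurs only if $\Thetah_j$ is closer to some other grid point than to $\theta_j(M_j)$, which forces $|\Thetah_j-\Theta_j|\ge \tfrac12\cdot 2^{-nR_j}$, i.e. $(\Theta_j-\Thetah_j)^2 \ge \tfrac14 2^{-2nR_j}$. (iii) By Markov's inequality,
\[
\P(\Mh_j\ne M_j) \;\le\; \P\!\left((\Theta_j-\Thetah_j)^2 \ge \tfrac14 2^{-2nR_j}\right)
\;\le\; \frac{D^{(n)}_j}{\tfrac14 2^{-2nR_j}} \;\le\; 4\cdot 2^{-2n(E_j-\eps-R_j)},
\]
which tends to $0$ as $n\to\infty$ provided $\eps$ is chosen small enough that $E_j-\eps-R_j>0$ (possible since $R_j<E_j$). (iv) A union bound over $j=1,\ldots,k$ gives $P_e^{(n)} = \P(\Mh_j\ne M_j \text{ for some } j) \le 4\sum_{j=1}^k 2^{-2n(E_j-\eps-R_j)} \to 0$. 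Hence $(R_1,\ldots,R_k)$ is achievable under power constraint $P$.

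I expect the only genuine subtlety — the "main obstacle" — to be the power-constraint bookkeeping in step (i): the MSE code's power guarantee~\eqref{avP} is stated as an asymptotic bound on $\frac1n\sum_i \E(X_i^2)$ with the expectation taken over the \emph{continuous} uniform messages, whereas after quantization the messages live on a finite grid. One must argue that this does not inflate the power. The cleanest route is to note that the quantized input $\Thetav$ can itself be viewed as a (deterministic, then randomized-over-$M_j$) function, and since the set of grid points is contained in $(0,1)^k$ while the LQG encoder maps are continuous (indeed linear) in $\Thetav$, either the per-symbol second moments are uniformly controlled, or one passes to a sub-sequence / mild expurgation so that the average-power bound continues to hold; everything else is a routine quantization-plus-Markov argument.
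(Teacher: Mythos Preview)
Your overall strategy (quantize, nearest-neighbor decode, Markov/Chebyshev) matches the paper's, but you have misidentified the main obstacle, and as a result step~(iii) contains a genuine gap.

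The quantity $D_j^{(n)}=\E(\Theta_j-\Thetah_j)^2$ is, by definition, an expectation taken with $\Theta_j\sim\U(0,1)$ \emph{continuous}. Once you restrict $\Theta_j$ to a fixed finite grid, the error probability $\P(\Mh_j\ne M_j)$ is computed under a \emph{different} distribution on $\Theta_j$, and your Markov bound
\[
\P\!\big((\Theta_j-\Thetah_j)^2\ge \tfrac14 2^{-2nR_j}\big)\;\le\;\frac{D_j^{(n)}}{\tfrac14 2^{-2nR_j}}
\]
is no longer valid: the left side is under the grid distribution, while $D_j^{(n)}$ is the MSE under the continuous one. Nothing rules out the possibility that the conditional MSE $\E\big[(\Theta_j-\Thetah_j)^2\,\big|\,\Theta_j=\theta\big]$ is much larger than $D_j^{(n)}$ at every grid point you happened to choose. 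Your expurgation remark addresses only the power constraint, not this issue.

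The paper closes exactly this gap with an expurgation argument on the \emph{error probability} (following Shayevitz--Feder): one first bounds the averaged tail probability $p_{j,n}=\P(|\Theta_j-\Thetah_j|>\tfrac12 2^{-nR_j})$ under the continuous law via Chebyshev, then defines the bad set $T_{j,n}=\{\theta:\P(\cdot\mid\Theta_j=\theta)>\sqrt{p_{j,n}}\}$, shows $\P(T_{j,n})<\sqrt{p_{j,n}}\to 0$, and finally places the $2^{nR_j}$ message points in $(0,1)\setminus T_{j,n}$ with the required spacing. This guarantees the \emph{maximum} conditional error probability is at most $\sqrt{p_{j,n}}\to 0$, which is what~\eqref{avebymax} needs. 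By contrast, the paper treats the power constraint as a non-issue and simply asserts it carries over; so the difficulty is precisely the one you dismissed and not the one you flagged.
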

\IEEEproof See Appendix~\ref{applemMSE}.



\section{LQG Approach: Point-to-Point Channels}\label{SectionBCP2P}

Before presenting the LQG code for the AWGN-BC, we first revisit the communication problem over the point-to-point AWGN channel with feedback in Fig.~\ref{fig:P2PAWGN}, and demonstrate how the theory of LQG control can be used to design codes for communication over such channel.  It is well known~\cite{Elia2004} that the capacity-achieving code by Schalkwijk and Kailath (SK)~\cite{SchKai, Sch} can be derived as the solution of an optimal control problem. However, here we provide a derivation of this result that naturally generalizes to the case of multiple receivers.   


Let $S_1 \in \Real$ be the initial state of an unstable linear system with open-loop dynamics  
\[S_i=aS_{i-1}, \quad i=2,3, \ldots\] 
for some coefficient $a >1$, that is stabilized by a controller having full state information and where the control signal is corrupted by AWGN (see Fig.~\ref{fig:BCcontrolAWGN}). 
The closed-loop dynamic of this system is given by 
\begin{align}\label{eq:BCP2Pclosedloop}
S_i=aS_{i-1}+Y_{i-1}, \quad i=2,3, \ldots
\end{align}
where 
\begin{align}
\label{eq:AWGN_}
Y_i=X_i+Z_i, \quad Z_i \sim \N(0,1),
\end{align} 
and
\begin{align}\label{eq:BCP2Pcontrol}
X_i=\pi_i(S_i), \quad i=1,2,\ldots
\end{align}
The mappings $\{\pi_i\}_{i=1}^\infty$ are referred to as the {\it control}, and the linear dynamical system in~\eqref{eq:BCP2Pclosedloop}, which is characterized by the unstable mode $a$, is simply referred to as the {\it system}. We say that a control is stabilizing if $\limsup_{i \to \infty} \E(S^2_i) < \infty.$

Given a control ~\eqref{eq:BCP2Pcontrol} for the system~\eqref{eq:BCP2Pclosedloop}, we construct a sequence of $n$-codes for the point-to-point AWGN channel~\eqref{eq:AWGN_} with feedback (cf. Definition 3 in the special case of $k=1$) as follows. 
\begin{enumerate}
\item Encoder: Given a continuous message $\Theta \sim \U (0,1)$, the encoder recursively forms 
\begin{align}\label{P2Penc}
S_1&=\Theta \nonumber \\
S_i&=aS_{i-1}+Y_{i-1}, \quad i=2,3,\ldots,n
\end{align}
and transmits $X_i(\Theta,Y^{i-1})=\pi_i(S_i)$ for each $i \in \{1,\ldots,n\}$.
\item Decoder: the decoder sets $\Thetah_i=-a^{-i}\Sh_{i+1}$ as an estimate of the message $\Theta$, where $\Sh_i$ is recursively formed as
\begin{align}\label{P2Pdec}
\Sh_1&=0 \nonumber \\
\Sh_i&=a\Sh_{i-1}+Y_{i-1},  \quad i=2,3,\ldots,n.
\end{align} 
\end{enumerate}

\begin{figure*}[htbp]
\vspace{.2in}
\centering

     \psfrag{f}[b]{ \hspace{7 em}\small $S_{i}=g_i(S_{i-1},Y_{i-1})$}
\psfrag{s}[b]{ \hspace{6.5em} System}
\psfrag{y}[b]{\hspace{0em} \small $Y_i$}
\psfrag{u}[b]{\hspace{0em} \small $X_i$}
\psfrag{z}[b]{\hspace{.3em} \small  $Z_i$}
\psfrag{m}[b]{\hspace{.3em} \small  $M$}
\psfrag{t}[b]{\hspace{.3em} \small  $\Mh$}

\psfrag{c}[b]{\hspace{4em} \small Encoder}
\psfrag{f}[b]{\hspace{3em} \small Decoder}

   \includegraphics[width=4 in]{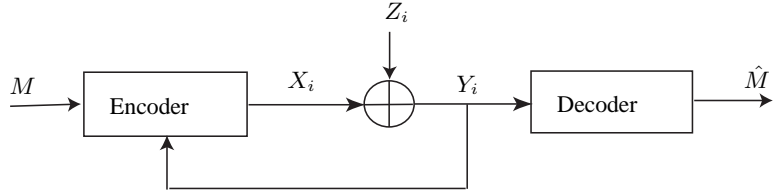} 
\vspace{.2in}   
  \caption{The AWGN channel with feedback.} \label{fig:P2PAWGN}
  \end{figure*}

\begin{figure*}[tbp]
 \psfrag{f}[b]{ \hspace{3.5em}\small $S_{i}=a S_{i-1}+Y_{i-1}$}
\psfrag{s}[b]{ \hspace{3.5em} System}
\psfrag{i}[b]{\hspace{0 em}\small  $S_1$}
\psfrag{y}[b]{\hspace{3em} \small $Y_i$}
\psfrag{u}[b]{\hspace{-2em} \small $X_i$}
\psfrag{z}[b]{\hspace{.3em} \small  $Z_i\sim \N(0,1)$}
\psfrag{q}{\hspace{-.5em} { \small Controller}}
\psfrag{e}[b]{ \hspace{3.5em} \small  $X_i=\pi_i(S_{i})$}
                      \centering
                      \hspace{-1in}
                                       
           \scalebox{1}{
  
  \includegraphics[width=3 in]{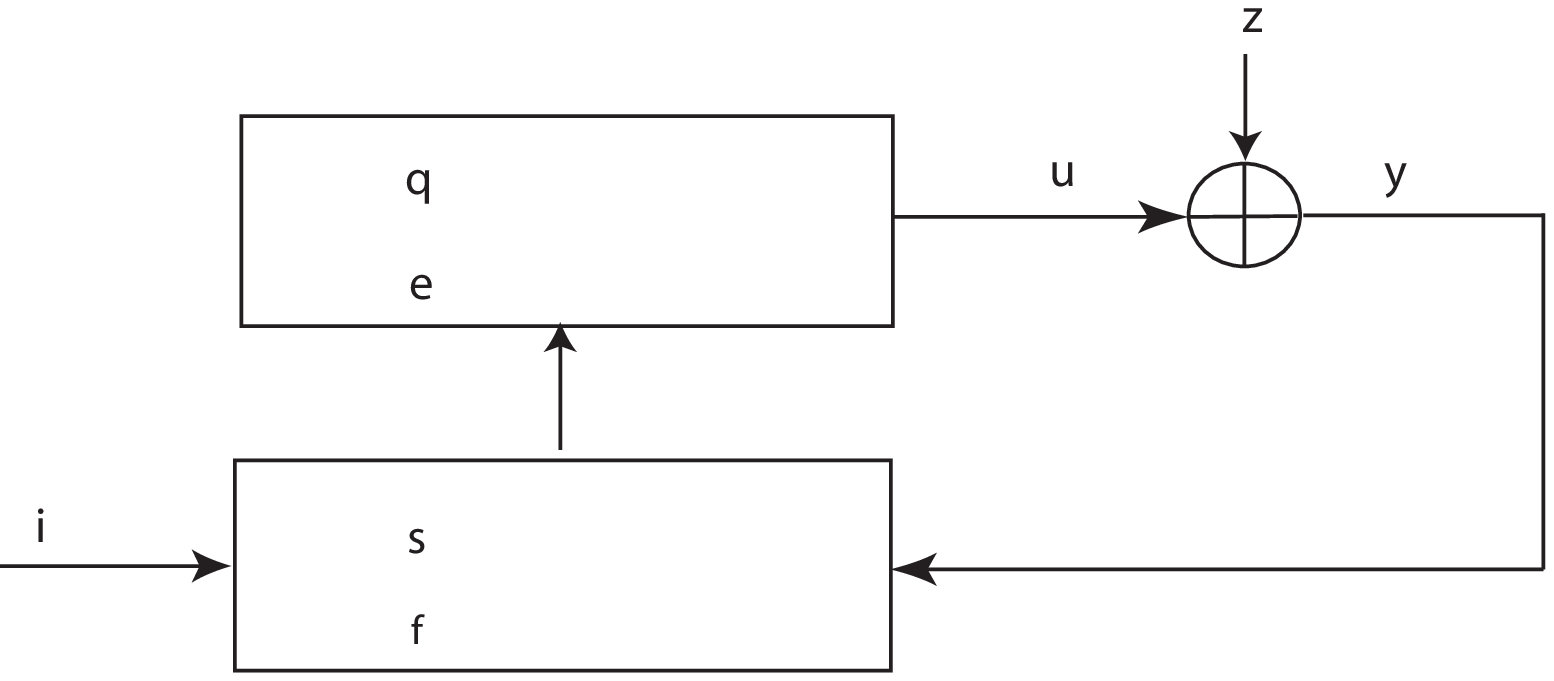}   }      
          \vspace{2ex}
          \caption{Stabilization over the AWGN channel.}
         \label{fig:BCcontrolAWGN}
\end{figure*}

\begin{lem}\label{stP2P}
If the control $\{\pi_i\}$ stabilizes the system with open-loop mode $a > 1$, then the $n$-code described by~\eqref{P2Penc} and~\eqref{P2Pdec} achieves MSE exponent $E=\log a$ under power constraint 
\begin{equation}
\label{eq:powc}
P=\limsup_{n\to \infty} \frac{1}{n} \sum_{i=1}^n\E(\pi^2_i(S_i)).
\end{equation}
\end{lem}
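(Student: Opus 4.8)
The plan is to track the discrepancy between the state sequence $\{S_i\}$ built by the encoder in~\eqref{P2Penc} and its reconstruction $\{\Sh_i\}$ built by the decoder in~\eqref{P2Pdec}, show that this discrepancy is an exponentially amplified copy of the message $\Theta$, and then read off the MSE exponent from the stability hypothesis.

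First I would introduce the error process $e_i := S_i - \Sh_i$. Subtracting the recursion in~\eqref{P2Pdec} from the one in~\eqref{P2Penc} makes the common feedback term $Y_{i-1}$ cancel, leaving $e_i = a\,e_{i-1}$ with $e_1 = S_1 - \Sh_1 = \Theta$, hence $e_{n+1} = a^n\Theta$. Since the decoder outputs $\Thetah_n = -a^{-n}\Sh_{n+1} = -a^{-n}\bigl(S_{n+1} - a^n\Theta\bigr) = \Theta - a^{-n}S_{n+1}$, the estimation error is exactly $\Theta - \Thetah_n = a^{-n}S_{n+1}$, so $\Dn = \E(\Theta - \Thetah_n)^2 = a^{-2n}\,\E(S_{n+1}^2)$. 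This reduces the whole statement to controlling the growth of $\E(S_{n+1}^2)$ on both sides.

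For the exponent, the stability hypothesis $\limsup_{i\to\infty}\E(S_i^2)<\infty$ provides a finite $B$ with $\E(S_{n+1}^2)\le B$ for all $n$, hence $-\tfrac{1}{2n}\log\Dn \ge \log a - \tfrac{1}{2n}\log B$ and $\liminf_n\bigl(-\tfrac{1}{2n}\log\Dn\bigr)\ge\log a$. For the reverse direction, write $S_{n+1}=aS_n+X_n+Z_n$ and observe that $S_n$ and $X_n=\pi_n(S_n)$ are measurable functions of $(\Theta,Z_1,\ldots,Z_{n-1})$, hence independent of $Z_n$; therefore $\E(S_{n+1}^2)=\E\bigl((aS_n+X_n)^2\bigr)+\E(Z_n^2)\ge 1$, so $-\tfrac{1}{2n}\log\Dn\le\log a$. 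Combining the two bounds gives $\lim_n\bigl(-\tfrac{1}{2n}\log\Dn\bigr)=\log a$, i.e., the code achieves MSE exponent $E=\log a$. The power constraint is immediate: the channel input is $X_i=\pi_i(S_i)$, so $\tfrac{1}{n}\sum_{i=1}^n\E(X_i^2)=\tfrac{1}{n}\sum_{i=1}^n\E(\pi_i^2(S_i))$, whose $\limsup$ is precisely the $P$ appearing in~\eqref{eq:powc}, so~\eqref{avP} holds.

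I do not expect a real obstacle. The points that need care are the index bookkeeping --- the decoder's estimate uses $\Sh_{n+1}$, which by~\eqref{P2Pdec} depends only on $Y_1,\ldots,Y_n$ and is thus computable after the $n$ channel uses --- and the fact that one needs the \emph{limit}, not just the $\limsup$, of $-\tfrac{1}{2n}\log\Dn$ to exist, which is exactly why the two-sided estimate $1\le\E(S_{n+1}^2)\le B$ is required rather than the upper bound alone. A minor sanity check is that $S_1=\Theta\sim\U(0,1)$ has finite variance, so it is an admissible initial state for the stabilization problem invoked in the hypothesis.
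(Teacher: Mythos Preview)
Your proposal is correct and follows essentially the same route as the paper: both derive $S_{n+1}=a^n(\Theta-\Thetah_n)$ by differencing the encoder and decoder recursions, read off $\Dn=a^{-2n}\E(S_{n+1}^2)$, and invoke stability for the exponent and $X_i=\pi_i(S_i)$ for the power. Your version is in fact slightly more careful than the paper's, which uses only the stability bound $\E(S_{n+1}^2)\le B$ and asserts the limit; you additionally supply the lower bound $\E(S_{n+1}^2)\ge 1$ via the independence of $Z_n$ from $(aS_n+X_n)$, which is what actually pins down the limit rather than merely the $\liminf$.
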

\begin{proof}
Combining~\eqref{P2Penc} and \eqref{P2Pdec} we have
\begin{align}
S_{i+1}&=a^i \Theta+\Sh_{i+1}\nonumber \\
&=a^i(\Theta-\Thetah_i) \label{P2Perr}
\end{align}
where the last equality follows from the fact that $\Thetah_i=-a^{-i}\Sh_{i+1}$. From~\eqref{P2Perr}, the MSE of the estimate $\Thetah_n$ is $\Dn=\E(\Theta-\Thetah_n)^2=a^{-2n} \E(S_{n+1}^2)$. Since the control is stabilizing we know $\limsup_{n \to \infty} \E(S_n^2) < \infty$ and hence
\[E=\lim_{n \to \infty} -\frac{1}{2n} \log \Dn = \log a.\]
Noting that $X_i=\pi_i(S_i)$, we conclude that MSE exponent $\log a$ is achievable under asymptotic power equal to $\limsup_{n\to \infty} \frac{1}{n} \sum_{i=1}^n\E(\pi^2_i(S_i))$.
\end{proof}

Lemma~\ref{stP2P} shows that for a fixed $a > 1$ any stabilizing control yields a code achieving MSE exponent $E=\log a$. In general, though, different codes require a different asymptotic power constraint~\eqref{eq:powc}. Thus, we are interested in finding the stabilizing control whose associate asymptotic power is minimal. 
It is known from the theory of LQG control~\cite{Chen} that the linear stationary control
\begin{align}\label{P2Pstlin}
X_i= - cS_i , \quad i=1,2,\ldots
\end{align}
where $c=(a^2-1)/a$ attains the minimum asymptotic power~\eqref{eq:powc}, which is given by
\[P^*= a^2-1.\] 
Hence, from Lemma~\ref{stP2P}, the linear code corresponding to this optimal LQG control, which we refer to as the LQG code, achieves the MSE exponent  
\[\log a= \half \log (1+P^*)\] 
under power constraint $P^*$. By Lemma~\ref{lemMSE}, when specialized to $k=1$, we conclude that the LQG code achieves any rate $R< \log a=1/2\log(1+P^*)$ under power constraint $P^*$, and hence is capacity achieving.  

A natural question to ask is what is the connection between the LQG code and the SK code, where the sender recursively transmits the estimation error at the receiver? To answer this question, note that combining~\eqref{P2Penc} and~\eqref{P2Pstlin} we can write a recursion for the channel input $X_i$ as
\begin{align}\label{LQGrecursion}
X_{i+1}&= a\left(X_{i} - \frac{c}{a} \ Y_{i}\right)
\end{align}
with $X_1=-c\Theta$. The recursion converges, irrespectively of the initial value, since $a-c = 1/a < 1$.  On the other hand, the channel input update in the SK can be represented by the following recursion~\cite{YH--lecture},
\begin{align}\label{SKrecursion}
X_{i+1}&= a\left(X_{i} - \frac{\E(X_iY_i)}{\E(Y_i^2)} \ Y_{i} \right)
\end{align}
with $X_1=\Theta$. Comparing~\eqref{LQGrecursion} and~\eqref{SKrecursion}, we can see that the LQG code is asymptotically equivalent to the SK code if   
\begin{align}\label{Equivalence}
\frac{\E(X_iY_i)}{\E(Y_i^2)} \to  \frac{c}{a}  \ \tx{as} \ i \to \infty
\end{align}
such that, in steady state, $(c/a)Y_i$ tends to the minimum mean squared error estimate of $X_i$ given $Y_i$. 

By plugging~\eqref{P2Pstlin}    
into~\eqref{P2Penc}, 
we have the closed-loop dynamics for $S_i$ as 
\[S_i=a^{-1}S_{i-1}+Z_{i-1}.\]
As $i \to \infty$ the second moment of the state converges to  $a^2/(a^2-1)$ and since $X_i=-cS_i$,  
\begin{align*}
\E(X_i^2)&\to a^2-1 \\
\E(Y_i^2) &\to a^2.
\end{align*}
Therefore, as $i \to \infty$,
\begin{align*}
\frac{\E(X_iY_i)}{\E(Y_i^2)}&=\frac{\E(X_i^2)}{\E(Y_i^2)}\to  \frac{a^2-1}{a^2} = \frac{c}{a}
\end{align*}
where the last equality follows from the fact that the optimal control is given by $c=(a^2-1)/a$. Hence, the LQG code and the SK code are asymptotically equivalent.

\section{LQG Code: AWGN Broadcast Channel with Feedback}\label{SectionLQGcode}

\begin{figure*}[tbp]
 \psfrag{e1}[b]{\hspace{1.5em}\small  $ \Mh_1$}
\psfrag{e2}[b]{\hspace{1.5em}\small  $\Mh_j$}
\psfrag{e3}[b]{\hspace{1em}\small  $\Mh_k$}
\psfrag{m}[b]{\hspace{-5 em} \small   $\Sv_1=(S_{11},\ldots,S_{k1})$}
\psfrag{s}[b]{ \hspace{2.8 em} System}
\psfrag{en}{\hspace{-3em} { \small$\Sv_{i}=A\Sv_{i-1}+\Yv_{i-1}$}}
\psfrag{q}{\hspace{-1em} { \small Controller}}
\psfrag{e}[b]{ \hspace{2.5em} \small  $X_i=\pi_i(\Sv_{i})$}

\psfrag{de1}{\hspace{-.2em} { \small Decoder $1$ }}
\psfrag{de2}{\hspace{-.2em} { \small Decoder $j$ }}
\psfrag{de3}{\hspace{-.2em} { \small Decoder $k$}}
\psfrag{x}{\hspace{-.7em} \small $X_i$} 
\psfrag{y1}{\hspace{-.4em}\small $Y_{1i}$}
\psfrag{y2}{\hspace{-.4em}\small $Y_{ji}$}
\psfrag{y3}{\hspace{-.4em}\small $Y_{ki}$}
\psfrag{z1}{\hspace{-.2em}\small  $Z_{1i} $}
\psfrag{z2}{\hspace{-.2em}\small $Z_{ji}$}
\psfrag{z3}{\hspace{-.2em}\small  $Z_{ki}$}
\psfrag{d}{\hspace{.5em} {\LARGE$\vdots$}}

                      \centering
                      \hspace{-1in}
                                       
           \scalebox{1}{
  
                \includegraphics[width=4 in]{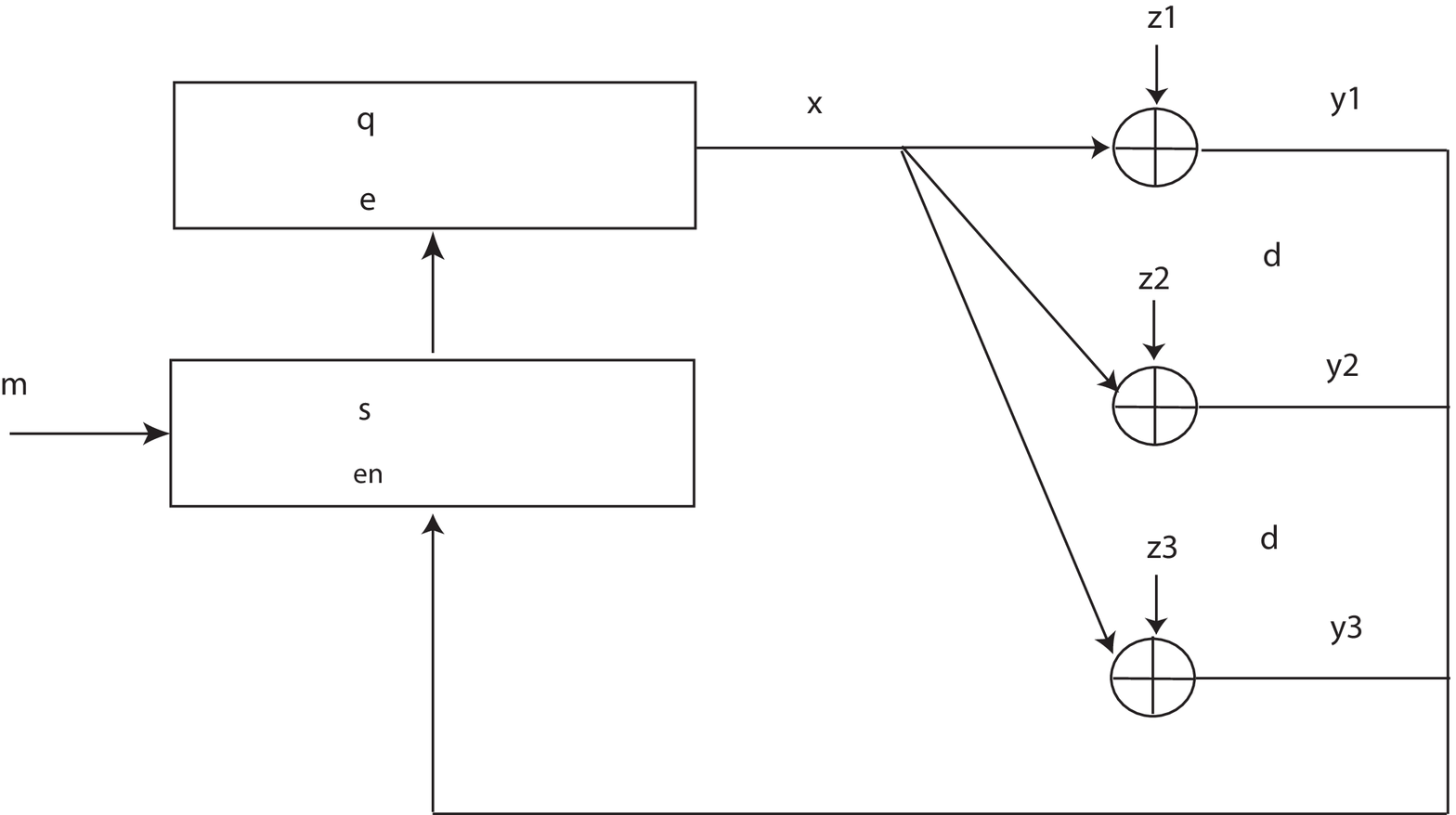}  }
          \vspace{2ex}
          \caption{Control over the AWGN broadcast channel}
         \label{fig:ControlBC}
\end{figure*}

In this section we extend the control theoretic approach to the case of a $k$-receiver AWGN-BC with feedback. We do so by considering the control problem depicted in Fig.~\ref{fig:ControlBC}, in which a $k$-dimensional unstable dynamical system is stabilized by a scalar controller having full state observation and where the scalar controller is perturbed by $k$, possibly correlated, AWGN noises, each affecting a different component of the state vector. We show that any controller stabilizing the system in mean-square sense yields a code for the $k$-receiver AWGN-BC with feedback. In particular, the LQG code is obtained from the minimum variance stabilizing control which can be computed using the LQG control theory. 
%
\subsection{Code Design Based on a Control Approach}
Assume for simplicity that channel input and outputs in~\eqref{BCYdef} are complex numbers and that the additive noise vector is drawn i.i.d. from a circular symmetric complex Gaussian distribution with covariance $K_z$. It is easy to see that if $(R_1,\ldots,R_k)$ is achievable under power constraint $P$ per each real dimension of the complex channel, then $(R_1,\ldots,R_k)$ is achievable under the same power constraint over the original real channel. In fact, one transmission over the complex channel can be reproduced by two consecutive transmission (of real and imaginary part, respectively) over the real channel.  
%

%
Let
\begin{align}\label{BCAform}
A &=  \text{diag}(a_1,\ldots,a_k) \in \mathbb{C}^{k \times k} 
\end{align}
where $a_j \in \Complex, \ j=1\,\ldots,k$, are distinct points outside the unit circle, i.e., $|a_j|>1$, and $A'$ denote the {\em conjugate transpose} of the matrix $A$. Consider the linear dynamical system with open-loop matrix $A$ shown in Fig.~\ref{fig:ControlBC},
\begin{align}\label{BCsystem}
\Sv_1 & = \Thetav \nn \\
\Sv_{i}&=A\Sv_{i-1}+\Yv_{i-1}, \quad i = 2,3,\ldots
\end{align}
where $\Sv_{i} = (S_{1i},\ldots, S_{ki})^T \in \Complex^k$ represents the state of the system at time $i$, $\Thetav=(\Theta_1,\ldots,\Theta_k)^T$ is a vector of complex random variables such that $\Theta_1,\ldots,\Theta_k$ are drawn i.i.d. from a uniform distribution over $(0,1)\times (0,1) \subset \mathbb{C}$, and $\Yv_i \in \Complex^k$ denotes the vector of complex channel outputs, i.e.,
\begin{align}\label{channel}
\Yv_{i}&=B X_{i} + \Zv_{i}, \quad \Zv_{i} \sim \Cc\Nc(0, K_z)
\end{align}
where $X_i \in \mathbb{C}$ here represents the scalar complex control signal, $\Zv_{i}\in \Complex^k$ denote the noise vector at time~$i$, and
\begin{align}\label{Bform}
B&= [1,\ldots,1]^\prime_{1 \times k}.  
\end{align}
At every discrete time $i$, the control input can depend only on the state of the system at time $i$, so
\begin{align}\label{BCcontrol}
X_{i}=\pi_{i}(\Sv_i), \quad i=1,2, \ldots
\end{align}
for some function $\pi_i : \Complex^k \to \Complex$. We refer to the sequence $\{\pi_i\}$ as the control. Since $|a_j|>1$, the eigenvalues of $A$ are outside the unit circle and the open-loop system~\eqref{BCsystem} is unstable. We say that the control $\{\pi_i\}$ stabilizes the closed-loop system if
\[\limsup_{n \to \infty} \E( \|\Sv_n\|^2) < \infty\]
where $\|\Sv_n\|^2$ denotes the $2$-norm of the vector $\Sv_n$.

%

%


%
%
%

Given the system~\eqref{BCsystem} and the control~\eqref{BCcontrol}, we present the following sequence of $n$-codes for the $k$-receiver AWGN-BC with feedback~\eqref{channel}.
\begin{enumerate}
\item Encoder: At each time $i \in \{1, \ldots, n\}$ the encoder recursively forms $\Sv_i$ as in~\eqref{BCsystem} and transmits $X_{i}=\pi_{i}(\Sv_i)$.

%
\item Decoders: At each time $i \in \{1, \ldots, n\}$ decoder $j$ forms an estimate $\Thetah_{ji}=-a_j^{-i}\Sh_{j(i+1)}$  for the desired message~$\Theta_j$, where $\Sh_{ji}$ is recursively formed as 
\begin{align}
\Sh_{j1} & =0  \nn \\
\Sh_{ji}&=a_j \Sh_{j(i-1)}+Y_{j(i-1)}, \quad i =2,3, \ldots,n. \label{BCdecoder}
\end{align}
\end{enumerate}
The following lemma characterizes the set of MSE exponent vectors that can be achieved by the sequence of $n$-codes so generated. 
\begin{lem}\label{MSEach}
Let $\{\pi_i\}$ be a stabilizing control for~\eqref{BCsystem}. Then, the MSE exponent vector $(\log |a_1|, \ldots, \log |a_k| )$ is achievable under power constraint
\begin{equation*}
P=\limsup_{n \to \infty} \frac{1}{n} \sum_{i=1}^n \E (\pi^2_i(\Sv_i)). 
\end{equation*}
\end{lem}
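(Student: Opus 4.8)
The plan is to mimic the proof of Lemma~\ref{stP2P} componentwise. First I would combine the encoder recursion~\eqref{BCsystem} with the decoder recursion~\eqref{BCdecoder} for each fixed~$j$. Since both $S_{ji}$ and $\Sh_{ji}$ obey the same scalar recursion $x_i = a_j x_{i-1} + Y_{j(i-1)}$, differing only in the initial condition ($S_{j1}=\Theta_j$ versus $\Sh_{j1}=0$), unrolling the recursion gives $S_{j(i+1)} = a_j^{i}\Theta_j + \Sh_{j(i+1)}$, and hence, using $\Thetah_{ji} = -a_j^{-i}\Sh_{j(i+1)}$,
\begin{align*}
S_{j(i+1)} = a_j^{i}\bigl(\Theta_j - \Thetah_{ji}\bigr).
\end{align*}
This is the exact analogue of~\eqref{P2Perr} in each coordinate. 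Taking squared magnitudes and expectations, the MSE of decoder~$j$ at time~$n$ is $\Dn_j = \E|\Theta_j-\Thetah_{jn}|^2 = |a_j|^{-2n}\,\E|S_{j(n+1)}|^2$.

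Next I would use the stabilization hypothesis. By assumption $\limsup_{n\to\infty}\E(\|\Sv_n\|^2)<\infty$, and since $\E|S_{j(n+1)}|^2 \le \E(\|\Sv_{n+1}\|^2)$, each coordinate second moment stays bounded. Therefore
\begin{align*}
E_j = \lim_{n\to\infty} -\frac{1}{2n}\log \Dn_j = \log|a_j| - \lim_{n\to\infty}\frac{1}{2n}\log\E|S_{j(n+1)}|^2 = \log|a_j|,
\end{align*}
where the last step uses that $\E|S_{j(n+1)}|^2$ is bounded above (giving $\ge \log|a_j|$) and bounded below away from $0$ — the latter because $\Theta_j$ has a nondegenerate distribution independent of the noise, so $\Dn_j$ cannot decay faster than the estimation-theoretic floor; equivalently one notes $\frac{1}{2n}\log\E|S_{j(n+1)}|^2 \to 0$ since a sequence bounded in $[\,\varepsilon, C\,]$ has logarithm $o(n)$. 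Since $X_i = \pi_i(\Sv_i)$, the power expended is exactly $\limsup_{n\to\infty}\frac1n\sum_{i=1}^n \E(\pi_i^2(\Sv_i))$, which is the claimed~$P$. Finally, I would invoke the complex-to-real reduction noted before~\eqref{BCAform}: achievability over the complex channel at power~$P$ per real dimension yields achievability of the same MSE exponent vector over the original real AWGN-BC at power~$P$.

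The only genuinely delicate point — and the one place I would be careful — is the lower bound $\liminf_n \E|S_{j(n+1)}|^2 > 0$, needed so that $\frac{1}{2n}\log\E|S_{j(n+1)}|^2 \to 0$ rather than $\to -\infty$. In the point-to-point proof (Lemma~\ref{stP2P}) this subtlety is glossed, and I would handle it the same way: $S_{j(n+1)} = a_j^n\Theta_j + \Sh_{j(n+1)}$, and $\Sh_{j(n+1)}$ is a function of the noises only (independent of $\Theta_j$), so $\E|S_{j(n+1)}|^2 \ge |a_j|^{2n}\var(\Theta_j) > 0$ is vacuous for a lower bound on the exponent — rather, the right statement is simply $\Dn_j = \E|\Theta_j-\Thetah_{jn}|^2 \ge$ the MMSE of estimating $\Theta_j$ from the channel outputs, which is strictly positive for any finite $n$ since the observation is corrupted by nondegenerate noise; more directly, $\Dn_j = |a_j|^{-2n}\E|S_{j(n+1)}|^2$ and boundedness of $\E(\|\Sv_n\|^2)$ from above is all one needs for $E_j \ge \log|a_j|$, while $E_j \le \log|a_j|$ follows from $\Dn_j \ge \E|\Theta_j - \E(\Theta_j\mid Y_j^n)|^2 > 0$ with the positive constant absorbed into the $-\frac{1}{2n}\log(\cdot)$ limit. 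In short, the lemma reduces to a coordinatewise replay of Lemma~\ref{stP2P} together with the observation that mean-square stability of the vector system forces mean-square boundedness of each coordinate.
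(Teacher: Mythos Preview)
Your argument is essentially the paper's own proof: both unroll the encoder and decoder recursions coordinatewise to obtain $S_{j(i+1)} = a_j^{i}(\Theta_j - \Thetah_{ji})$ (the vector form $\Sv_{i+1}=A^i(\Thetav-\hat\Thetav_i)$ in the paper), read off $\Dn_j = |a_j|^{-2n}\E|S_{j(n+1)}|^2$, and invoke mean-square stability to conclude $E_j=\log|a_j|$; the paper says nothing more than this.

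One correction to your side discussion: the claim that ``$\Sh_{j(n+1)}$ is a function of the noises only (independent of $\Theta_j$)'' is false, since $\Sh_{j(n+1)}$ is built from $Y_j^n$, which depends on $X^n=\pi(\Sv)$ and hence on $\Thetav$; you rightly abandon that route, but the clean way to get the lower bound you are after (which the paper simply omits) is to note that $S_{j(n+1)} = a_j S_{jn} + X_n + Z_{jn}$ with $Z_{jn}$ independent of $(S_{jn},X_n)$ and $\E Z_{jn}=0$, so $\E|S_{j(n+1)}|^2 \ge \E|Z_{jn}|^2 = 1$, forcing $\tfrac{1}{2n}\log\E|S_{j(n+1)}|^2\to 0$.
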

\IEEEproof See Appendix~\ref{applemMSEach}.


\subsection{LQG Code based on Optimal LQG Control}

According to Lemma~\ref{MSEach}, for a fixed open-loop matrix $A$, {\it any} stabilizing control yields a sequence of $n$-codes for the AWGN-BC with feedback that achieves the same MSE exponent vector $(\log|a_1|, \ldots, \log |a_k|)$ under a power constraint determined by the asymptotic control variance. The following theorem characterizes the performance of the LQG code, which corresponds to the minimum variance control that can be computed using the LQG control theory.
\begin{thm}\label{thmLQGanalysis}
Let  $A$ and $B$ be given as in~\eqref{BCAform} and~\eqref{Bform}, and $K_z$ be the covariance matrix of the noise vector in the AWGN-BC~\eqref{channel}. Then, the rate vector $(\log |a_1|, \ldots, \log |a_k| )$ is achievable under power constraint~
\begin{equation}
\label{minpow}
P(A, K_z)=CK_sC'=\tr(GK_z)
\end{equation} 
where 
\begin{align}\label{optC}
C= (B'GB+1)^{-1}B'GA
\end{align}
and 
$G$ is the unique positive definite solutions to the discrete algebraic Riccati equation (DARE)
\begin{align}\label{riccati}
G=A'GA-A'GB(B'GB+1)^{-1}B'GA
\end{align}
such that $A-BC$ is stable, that is, every eigenvalue of $A- BC$ lies inside the unit circle, and $K_s$ is the unique solution to the discrete algebraic Lyapunov equation (DALE)
\begin{align}\label{Lyapunov}
K_s=(A-B C)K_s(A-BC)' + K_z.
\end{align} 
\end{thm}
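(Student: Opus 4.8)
The plan is to invoke Lemma~\ref{MSEach} together with standard LQG/state-feedback control theory, and then translate the MSE-exponent statement into a rate statement via Lemma~\ref{lemMSE}. First I would observe that, by Lemma~\ref{MSEach}, for the fixed open-loop matrix $A$ in~\eqref{BCAform} any stabilizing control for the system~\eqref{BCsystem} yields a sequence of $n$-codes achieving the MSE exponent vector $(\log|a_1|,\ldots,\log|a_k|)$ under a power constraint equal to the asymptotic control variance $\limsup_n \frac1n\sum_i \E(\pi_i^2(\Sv_i))$. Hence it suffices to exhibit a stabilizing control whose asymptotic variance equals $P(A,K_z)$ as given in~\eqref{minpow}, and (for the ``achievable'' claim as stated) this is all that is needed; minimality over all stabilizing controls, though true by LQG theory, is not logically required for the theorem's conclusion.

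The core step is to analyze the linear stationary control $X_i=-C\Sv_i$ with $C$ as in~\eqref{optC}, where $G$ is the stabilizing solution of the DARE~\eqref{riccati}. I would recall the standard facts from discrete-time LQG theory (e.g.~\cite{Chen}): since $(A,B)$ is controllable (here $B=\1$ and $A$ has distinct eigenvalues, so the pair is controllable) the DARE~\eqref{riccati} has a unique positive semidefinite — in fact positive definite, since $A$ is invertible and has no eigenvalues on the unit circle — solution $G$ for which the closed-loop matrix $A-BC$ has all eigenvalues strictly inside the unit circle. With $A-BC$ stable, the driven recursion $\Sv_i=(A-BC)\Sv_{i-1}+\Zv_{i-1}$ obtained by substituting $X_i=-C\Sv_i$ and $\Yv_{i-1}=BX_{i-1}+\Zv_{i-1}$ into~\eqref{BCsystem} is mean-square stable, so $\limsup_n\E(\|\Sv_n\|^2)<\infty$ and the control is stabilizing in the sense required by Lemma~\ref{MSEach}. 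Moreover the state covariance converges to the unique positive semidefinite solution $K_s$ of the DALE~\eqref{Lyapunov} (uniqueness again from stability of $A-BC$), whence the asymptotic control variance is $\lim_i\E(|X_i|^2)=\lim_i\E(C\Sv_i\Sv_i'C')=CK_sC'$, giving the first expression in~\eqref{minpow}.

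Next I would establish the identity $CK_sC'=\tr(GK_z)$. This is the one genuinely computational step, and it is the place I expect the bookkeeping to be most delicate (keeping conjugate transposes straight in the complex setting). The cleanest route is algebraic: using the DARE to write $G=(A-BC)'GA$ — which follows by substituting $C=(B'GB+1)^{-1}B'GA$ into~\eqref{riccati} and factoring — and also $G=(A-BC)'G(A-BC)+C'C$ (the standard ``completion of squares'' rewriting of the DARE in closed-loop form), one takes $\tr(\,\cdot\,K_s)$ of the latter and uses the DALE~\eqref{Lyapunov}: $\tr(GK_s)=\tr((A-BC)'G(A-BC)K_s)+\tr(C'CK_s)$, and since $\tr((A-BC)'G(A-BC)K_s)=\tr(G(A-BC)K_s(A-BC)')=\tr(G(K_s-K_z))$, the terms rearrange to $\tr(C'CK_s)=CK_sC'=\tr(GK_z)$. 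Finally, having shown the MSE exponent vector $(\log|a_1|,\ldots,\log|a_k|)$ is achievable under power $P(A,K_z)$, Lemma~\ref{lemMSE} upgrades this to achievability of every rate vector $(R_1,\ldots,R_k)$ with $R_j<\log|a_j|$ under the same power constraint; taking $R_j\uparrow\log|a_j|$ and invoking closedness of the capacity region gives that $(\log|a_1|,\ldots,\log|a_k|)$ itself is achievable, as claimed.
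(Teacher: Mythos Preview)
Your proposal is correct and follows essentially the same route as the paper: invoke Lemma~\ref{MSEach} to reduce to exhibiting a stabilizing control with the right asymptotic variance, use the LQG feedback $X_i=-C\Sv_i$ with $C$ from~\eqref{optC} and the stabilizing DARE solution $G$, compute the stationary control variance via the DALE~\eqref{Lyapunov}, and finish with Lemma~\ref{lemMSE}. The only minor difference is how the equality $CK_sC'=\tr(GK_z)$ is obtained: the paper gets $\tr(GK_z)$ as the LQG optimal cost (Lemma~\ref{lemLQG}) and $CK_sC'$ from the Lyapunov computation (Lemma~\ref{lempower}), so equality is immediate since both are the asymptotic power of the same control, whereas you verify the identity directly via the closed-loop DARE rewriting $G=(A-BC)'G(A-BC)+C'C$ and the DALE---a self-contained algebraic check that avoids appealing to the LQG cost formula.
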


The proof of the theorem makes use of the following two lemmas.
\begin{lem}\label{lemLQG}
Given the unstable open-loop matrix~\eqref{BCAform}, the stationary and linear control
\begin{align}\label{BClincontrol}
X_i=-C\Sv_i
\end{align}
where $C$ and $G$ are given in~\eqref{optC} and~\eqref{riccati}, respectively, stabilizes the closed-loop system~\eqref{BCsystem} and minimizes the asymptotic average control power
\begin{align}\label{avconpower}
\limsup_{n \to \infty} \frac{1}{n}  \sum_{i=1}^n \E (\pi_i^2(\Sv_i)) .
\end{align}
The minimum stationary power is given by
\begin{align*}
\tr(GK_z)
\end{align*}
where $K_z$ is the covariance matrix of the noise vector in~\eqref{channel}.
\end{lem}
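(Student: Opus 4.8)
The plan is to prove Lemma~\ref{lemLQG} by the completing-the-square argument that underlies discrete-time LQG optimal control~\cite{Chen}, here specialized to the degenerate problem in which the only running cost is the control power. First I would substitute the channel equation~\eqref{channel} into the state recursion~\eqref{BCsystem} to obtain the closed-loop form $\Sv_{i+1}=A\Sv_i+BX_i+\Zv_i$. Since $\Zv_i$ is zero-mean and independent of the pair $(\Sv_i,X_i)$, expanding $\E(\Sv_{i+1}'G\Sv_{i+1})$ produces $\tr(GK_z)$ plus $\E\big((A\Sv_i+BX_i)'G(A\Sv_i+BX_i)\big)$. Rewriting the Riccati equation~\eqref{riccati} as $A'GA=G+(B'GB+1)C'C$, using also $A'GB=(B'GB+1)C'$ (both immediate from the definition~\eqref{optC} of $C$), and then completing the square in the quantity $X_i+C\Sv_i$, one gets for \emph{every} control $\{\pi_i\}$ the identity
\[
\E(|X_i|^2)=\E(\Sv_i'G\Sv_i)-\E(\Sv_{i+1}'G\Sv_{i+1})+(B'GB+1)\,\E\big(|X_i+C\Sv_i|^2\big)+\tr(GK_z).
\]
No inequality enters at this stage; it is pure bookkeeping.

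Summing this identity over $i=1,\dots,n$, dividing by $n$, and discarding the nonnegative middle term gives
\[
\frac1n\sum_{i=1}^n\E(|X_i|^2)\ \ge\ \tr(GK_z)+\frac{1}{n}\big(\E(\Sv_1'G\Sv_1)-\E(\Sv_{n+1}'G\Sv_{n+1})\big).
\]
Because $\Sv_1=\Thetav$ is bounded, $\E(\Sv_1'G\Sv_1)$ is a finite constant; and for any \emph{stabilizing} control, $0\le\E(\Sv_{n+1}'G\Sv_{n+1})\le\lambda_{\max}(G)\,\E(\|\Sv_{n+1}\|^2)$ is bounded in $n$. Hence both $1/n$-terms vanish, so $\limsup_{n\to\infty}\frac1n\sum_{i=1}^n\E(|X_i|^2)\ge\tr(GK_z)$: no stabilizing control can achieve asymptotic average power below $\tr(GK_z)$.

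For the stationary linear control~\eqref{BClincontrol}, i.e.\ $X_i=-C\Sv_i$, the closed loop becomes $\Sv_{i+1}=(A-BC)\Sv_i+\Zv_i$. Since $A-BC$ is stable, $\E(\Sv_n\Sv_n')$ converges to the unique solution $K_s$ of the Lyapunov equation~\eqref{Lyapunov}, so $\limsup_{n\to\infty}\E(\|\Sv_n\|^2)=\tr K_s<\infty$ and the control indeed stabilizes~\eqref{BCsystem}. Plugging $X_i+C\Sv_i\equiv 0$ into the identity of the first step kills the middle term, leaving $\frac1n\sum_{i=1}^n\E(|X_i|^2)=\tr(GK_z)+\frac1n\big(\E(\Sv_1'G\Sv_1)-\E(\Sv_{n+1}'G\Sv_{n+1})\big)\to\tr(GK_z)$, which meets the lower bound; thus~\eqref{BClincontrol} is optimal and the minimum stationary power is $\tr(GK_z)$. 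Finally, since $\E(|X_n|^2)=\E(|C\Sv_n|^2)\to CK_sC'$, the running average $\frac1n\sum_{i=1}^n\E(|X_i|^2)$ tends to $CK_sC'$ as well; comparing with the limit $\tr(GK_z)$ just obtained gives $CK_sC'=\tr(GK_z)$, the equality asserted in~\eqref{minpow}.

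The one genuinely non-elementary ingredient is the well-posedness presupposed in the statement: that~\eqref{riccati} has a \emph{unique positive definite} solution $G$ for which $A-BC$ is stable. Because the state cost vanishes, $G=0$ is always a solution of~\eqref{riccati} and the usual detectability hypothesis of the discrete LQR theorem is empty, so one must instead exploit that $A$ is anti-stable (all $|a_j|>1$) and that $(A,B)$ is controllable. The latter holds because $A=\diag(a_1,\dots,a_k)$ has distinct diagonal entries and $B=\1$ has all entries nonzero, so the controllability (PBH) test passes; in this regime~\eqref{riccati} is known to admit a unique positive definite stabilizing solution — the maximal one — which is the $G$ used above, and I would cite~\cite{Chen} for this fact. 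With that in hand, everything else is routine linear-systems manipulation, and I expect that step — correctly invoking the degenerate-cost Riccati theory — to be the main obstacle.
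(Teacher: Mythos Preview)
Your proof is correct and follows essentially the same approach as the paper: both reduce the problem to the standard LQG/cheap-control setup on the closed-loop recursion $\Sv_{i+1}=A\Sv_i+BX_i+\Zv_i$ and invoke the stabilizing solution of the DARE~\eqref{riccati}, with the paper citing~\cite{YH} for existence/uniqueness where you appeal to controllability of $(A,B)$ and anti-stability of $A$. The main difference is presentational: the paper merely references the LQG derivation, whereas you spell out the completing-the-square identity explicitly and, as a bonus, extract the equality $CK_sC'=\tr(GK_z)$ that the paper obtains only later by combining Lemmas~\ref{lemLQG} and~\ref{lempower}.
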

\begin{IEEEproof}
Plugging \eqref{channel} into~\eqref{BCsystem}, we have
\begin{align}\label{BCclosedloop}
\Sv_i=A\Sv_{i-1}+B X_{i-1}+\Zv_{i-1}.
\end{align}
Consider the problem of finding the stabilizing control that minimizes the asymptotic average control power
\[
\limsup_{n \to \infty} \ \frac{1}{n}\sum_{i=1}^n \E(\pi^2_i(\Sv_i)).
\]
This problem is similar to the standard LQG problem~\cite{Chen} in the special case where the cost function does not depend on the state. For this problem, we can derive the Riccati equation~\eqref{riccati} and the stationary linear control~\eqref{optC}, similar to the solution to the LQG problem, to establish a sufficient condition for optimality in terms of the asymptotic power. 

Unlike the standard LQG problem, though, here we require the control to stabilize the system (see Lemma~\ref{MSEach}). Next, we show that there exists a unique solution to~\eqref{riccati} such that the control $C$ in~\eqref{optC} is stabilizing, that is, $A-BC$ is stable. Since the eigenvalues of $A$ are all outside of the unit circle and the elements of $B$ are nonzero we know $(A,B)$ is detectable, that is, there
exists a $C \in \Complex^{1 \times k}$ such that $A-BC$ is stable. Then, by~\cite[Lemma 2.4]{YH} there exists a {\it unique} positive definite solution to~\eqref{riccati} for which $A-BC$ is stable. 
\end{IEEEproof}

The following lemma characterizes the asymptotic performance of the minimum variance control.
\begin{lem}\label{lempower}
Let the linear control in~\eqref{BClincontrol} be stabilizing, that is, all eigenvalues of $A-BC$ lies inside the unit circle. Then, the asymptotic average control power~\eqref{avconpower} is given by
\begin{align}\label{powerLy}
CK_s C'
\end{align}
where $K_s$ is the unique solution to the following DALE
\begin{align*}
K_s=(A-BC)K_s(A-BC)' + K_z.
\end{align*}
\end{lem}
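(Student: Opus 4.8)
The plan is to substitute the linear control \eqref{BClincontrol} into the closed-loop dynamics so that the state obeys an autonomous linear recursion driven by the noise, to track the second-moment matrix of the state along this recursion, and then to pass to the Cesàro average. First I would plug $X_i=-C\Sv_i$ into \eqref{BCclosedloop} to obtain $\Sv_i = F\Sv_{i-1}+\Zv_{i-1}$ for $i\ge 2$, where $F:=A-BC$ and $\Sv_1=\Thetav$; iterating gives $\Sv_i = F^{i-1}\Thetav + \sum_{j=1}^{i-1}F^{i-1-j}\Zv_j$. Since each $\Zv_j$ is zero-mean and independent of $\Thetav$ and of $\Zv_1,\ldots,\Zv_{j-1}$, and the noise is circularly symmetric, all cross terms vanish when we form $Q_i:=\E(\Sv_i\Sv_i')$, yielding the recursion $Q_i=FQ_{i-1}F'+K_z$ for $i\ge 2$ with $Q_1=\E(\Thetav\Thetav')$ (finite, since $\Thetav$ is bounded). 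Because $X_i=-C\Sv_i$ is scalar, $\E(\pi_i^2(\Sv_i))=\E(|C\Sv_i|^2)=CQ_iC'$.

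Next I would show $Q_i\to K_s$. By hypothesis every eigenvalue $\lambda_j$ of $F$ satisfies $|\lambda_j|<1$, so the linear operator $\mathcal{F}\colon X\mapsto FXF'$ on $k\times k$ matrices has spectral radius $\max_{j,l}|\lambda_j\bar\lambda_l|<1$. Hence $I-\mathcal{F}$ is invertible, the DALE $K_s=FK_sF'+K_z$ has the \emph{unique} solution $K_s=\sum_{m=0}^{\infty}F^mK_z(F')^m$, and unrolling the recursion for $Q_i$ gives $Q_i-K_s=F^{i-1}(Q_1-K_s)(F')^{i-1}\to 0$ as $i\to\infty$. Therefore $CQ_iC'\to CK_sC'$.

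Finally, since the sequence $CQ_iC'$ converges to $CK_sC'$, its Cesàro averages converge to the same limit, so $\lim_{n\to\infty}\frac1n\sum_{i=1}^n CQ_iC' = CK_sC'$; in particular the $\limsup$ in \eqref{avconpower} equals $CK_sC'$, which is \eqref{powerLy}. The only point requiring any care is the transient contribution of the (nonzero-mean, arbitrary finite-second-moment) initial state $\Sv_1=\Thetav$, but the stability of $F$ kills it geometrically, so it affects neither $\lim_i Q_i$ nor the Cesàro average; I do not anticipate a substantial obstacle here, as each step is a standard fact about stable discrete-time linear systems and Lyapunov equations.
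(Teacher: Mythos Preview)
Your proposal is correct and follows essentially the same route as the paper: substitute the linear control into the closed-loop dynamics, derive the Lyapunov recursion for the state second-moment matrix, invoke stability of $A-BC$ to get convergence to the unique DALE solution $K_s$, and read off $\E|X_i|^2=CK_{s,i}C'\to CK_sC'$. You are in fact more careful than the paper on two points it leaves implicit: you justify uniqueness of $K_s$ via the spectral radius of $X\mapsto FXF'$, and you explicitly close the gap between pointwise convergence of $CQ_iC'$ and the Ces\`aro limit required by~\eqref{avconpower}.
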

\begin{remark}
From Lemma~\ref{lemLQG} and Lemma~\ref{lempower} it is clear that the performance of the described feedback code depends on the correlation among the noises at the receivers. 
\end{remark}
\begin{IEEEproof}
Plugging~\eqref{BClincontrol} into the closed-loop system dynamics~\eqref{BCclosedloop} we have 
\[\Sv_i=(A-BC)\Sv_{i-1}+\Zv_{i-1}.\]
Let $K_{s,i}$ denote the covariance matrix of the state $\Sv_i$, then we have the following discrete algebraic Lyapunov recursion
\[K_{s,i+1}=(A-BC)K_{s,i}(A-BC)'+K_z.\]
By assumption $(A-BC)$ is stable and $K_{s,0} \succ 0$, hence the above recursion converges to the unique positive-definite solution to the  following discrete algebraic Lyapunov equation (DALE)
\begin{align*}
K_{s}=(A-BC)K_{s}(A-BC)'+K_z.
\end{align*}
Note that $X_i=-C\Sv_i$ and hence $\E(X^2_i)= CK_{s,i}C'$, which completes the proof. 
\end{IEEEproof}


{\it Proof of Theorem~\ref{thmLQGanalysis}:}
According to Lemma~\ref{lemLQG} and Lemma~\ref{lempower}, for $A=\diag(a_1,\ldots,a_k)$ with $|a_j| > 1$, there exists a stabilizing control with asymptotic power equal to~\eqref{minpow}. Furthermore, according to Lemma~\ref{MSEach}, we can construct a sequence of $n$codes corresponding to this control, which achieves the MSE exponent vector $(\log |a_1|,\ldots, \log |a_k|)$ with  the same asymptotic power as the control. Finally, by Lemma~\ref{lemMSE}, we conclude that rate vector $(\log |a_1|,\ldots, \log |a_k|)$ is achievable under asymptotic power constraint~\eqref{minpow}. \hfill \IEEEQED

\begin{example}
Consider the special case of a two-receiver AWGN-BC, and let
\begin{align*}
A =  \left( \begin{array}{ccccc}
             a_1  &   0 \\
             0 &     a_2 \\
        \end{array} \right), \ K_z =  \left( \begin{array}{ccccc}
             1  &   \rho \\
             \rho &     1 \\
        \end{array} \right),
  \end{align*}
for $|a_1|>1$, $|a_2|>1$, and  $-1 < \rho < 1$.  By solving~\eqref{riccati} and plugging the solution into~\eqref{minpow} we obtain that $( \log |a_1|, \log |a_2| ) $ is an achievable rate pair under power constraint
\begin{align*}
 \frac{1}{|a_1-a_2|^2} \Big(|a_1a_2&-1|^2 (|a_1|^2+|a_2|^2-2) \nonumber \\
& - \rho (|a_1|^2-1)(|a_2|^2-1) (\tx{Re}(a_1a'_2)-2)\Big).
\end{align*}
In the special case where the noises at the receivers are independent ($\rho = 0$), the code in~\cite{Elia2004} has the same performance as the LQG code.  
  \end{example}

\section{Independent Noises: Power Gain}\label{SectionBCsymmetric}

In this section, we analyze the performance of the LQG code in the special case of independent noises, i.e., assuming that $K_z=I_{k \times k}$, which is the case for most practical scenarios. 
We characterize the sum rate $R(k,P)$ achievable by the LQG code under power constraint~$P$ for the symmetric case where diagonal elements of $A$ in~\eqref{BCAform} are
\begin{equation}
\begin{array}{rl}
a_j = a e^{\frac{2\pi \sqrt{-1}}{k}(j-1)}, \quad j=1,\ldots, k.
\end{array}
\label{symmetricA}
\end{equation}
and $a > 1$ is real.

\begin{thm}\label{BCthmsym}
Given $A$ as in~\eqref{symmetricA}, the LQG code achieves symmetric sum rate $R(k,P)$, i.e., $R_j=R(k,P)/k$, $j=1,\ldots,k$, under power constraint $P$ where 
\[R(k,P) = \half \log (1+P\phi)\]
and $\phi(k,P)$ is the unique solution in the interval $[1,k]$ to
\[(1+P\phi)^{k-1}=\left(1+\frac{P}{k}\phi (k-\phi)\right)^k.\] 
\end{thm}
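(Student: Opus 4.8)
\noindent\textit{Proof proposal.} The plan is to apply Theorem~\ref{thmLQGanalysis} with $A$ as in~\eqref{symmetricA} and $K_z=I$: this immediately gives that the rate vector $(\log a,\ldots,\log a)$, hence the symmetric sum rate $k\log a$, is achievable under power $P=\tr(G)$, where $G\succ 0$ is the stabilizing solution of the DARE~\eqref{riccati}. Everything then reduces to solving this DARE in closed form for the symmetric matrix~\eqref{symmetricA} and re-expressing the relation between $P$ and $a$ through $\phi$.

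The key step is to exploit the cyclic symmetry of~\eqref{symmetricA}. Let $\Pi$ be the cyclic permutation matrix with $\Pi e_j=e_{j+1}$ (indices mod $k$) and $\omega=e^{2\pi\sqrt{-1}/k}$. Since $a_j=a\,\omega^{j-1}$ one has $\Pi A\Pi'=\omega^{-1}A$, while $\Pi B=B$ and $\Pi K_z\Pi'=K_z$. Replacing $G$ by $\Pi G\Pi'$ in~\eqref{riccati}, the unit-modulus factors cancel in every term (each term pairs an $A$ with an $A'$), and $\Pi G\Pi'$ is again a positive definite solution whose associated closed-loop matrix is stable; by the uniqueness in Theorem~\ref{thmLQGanalysis}, $G=\Pi G\Pi'$, i.e., $G$ is circulant. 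Diagonalizing $G=U\Lambda U'$ by the unitary DFT matrix $U$, with $\Lambda=\diag(\lambda_1,\ldots,\lambda_k)\succ0$, and using $U'B=\sqrt{k}\,e_1$ together with $U'AU=aS$ for a cyclic shift $S$, the DARE~\eqref{riccati} decouples into a scalar system of the form $\lambda_j=a^2\lambda_{j+1}$ for $j=1,\ldots,k-1$ and $\lambda_k=a^2\lambda_1-a^2k\lambda_1^2/(k\lambda_1+1)$, the rank-one correction in~\eqref{riccati} affecting only one Fourier coordinate. The first relations give $\lambda_j=a^{-2(j-1)}\lambda_1$; inserting $\lambda_k=a^{-2(k-1)}\lambda_1$ into the last relation and cancelling $\lambda_1>0$ yields $k\lambda_1+1=a^{2k}$, whence $\lambda_1=(a^{2k}-1)/k$ and
\begin{align*}
P=\tr(G)=\sum_{j=1}^k\lambda_j=\lambda_1\,\frac{1-a^{-2k}}{1-a^{-2}}=\frac{(a^{2k}-1)^2}{k\,a^{2(k-1)}(a^2-1)}.
\end{align*}

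It remains to identify $\phi$. Put $\phi:=(a^{2k}-1)/P$, so $1+P\phi=a^{2k}$ and therefore $k\log a=\half\log(1+P\phi)$, the claimed rate. Writing $t=a^2$ and simplifying, $\phi=k\,t^{k-1}/(1+t+\cdots+t^{k-1})$, and since $t>1$ the denominator lies strictly between $t^{k-1}$ and $k\,t^{k-1}$, so $\phi\in(1,k)$. Using the identity $(t^k-1)\phi=k\,t^{k-1}(t-1)$ a direct substitution gives $1+\tfrac Pk\phi(k-\phi)=t^{k-1}$, hence both $(1+P\phi)^{k-1}$ and $(1+\tfrac Pk\phi(k-\phi))^k$ equal $t^{k(k-1)}$, so $\phi$ solves the stated equation. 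For a prescribed $P>0$ an admissible $a$ exists because the displayed expression for $P$ is continuous in $a$ and sweeps out $(0,\infty)$ as $a$ ranges over $(1,\infty)$. Finally, for uniqueness of the root in $[1,k]$, consider $D(\phi)=(1+P\phi)^{(k-1)/k}-1-\tfrac Pk\phi(k-\phi)$: a short computation shows $D''>0$ on $[1,k]$, $D(1)\le0$ by the tangent-line bound for the concave map $x\mapsto x^{(k-1)/k}$ at $x=1$, and $D(k)>0$; a convex function that is nonpositive at the left endpoint and positive at the right endpoint has exactly one zero, and on $[1,k]$ the equation $D(\phi)=0$ is equivalent to the stated one.

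The main obstacle is the second step: establishing the circulant structure of $G$ and correctly pushing the DARE through the DFT — in particular pinning down that the rank-one correction lands on a single Fourier coordinate, and verifying that the resulting closed form is indeed the stabilizing solution demanded by Theorem~\ref{thmLQGanalysis} (which becomes automatic once the uniqueness there is invoked). The subsequent identification of $\phi$ and the convexity/uniqueness argument are routine.
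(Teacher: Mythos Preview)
Your proposal is correct and follows the same overall route as the paper: apply Theorem~\ref{thmLQGanalysis} with $K_z=I$, exploit the cyclic symmetry of~\eqref{symmetricA} to solve the DARE~\eqref{riccati} explicitly, and then re-express the relationship between $a$ and $P$ through the auxiliary variable $\phi$. The differences are in where the work is done. The paper outsources the hard step --- that $G$ is circulant with eigenvalues in geometric progression $\lambda_i=a^{-2}\lambda_{i-1}$ and with $1+k\lambda_1=a^{2k}$ --- to Lemma~\ref{lemsym}, which is quoted from an external reference; it then combines~\eqref{symsum} and~\eqref{symother} and makes the change of variable $\phi=k\lambda_1/P$ (equivalently your $\phi=(a^{2k}-1)/P$). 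You instead re-derive the content of that lemma from scratch via the permutation-invariance argument and the DFT diagonalization, and you reach the $\phi$-equation through the direct computation of $P=\sum_j\lambda_j$ rather than via the second identity~\eqref{symother}; the two routes are algebraically equivalent. Your proof also goes beyond the paper in two places: (i) you justify that $\Pi G\Pi'$ is again the stabilizing solution (this does require the short check that $A-B\tilde C$ is similar, up to the unimodular factor $\omega$, to $A-BC$, which you assert but should make explicit), and (ii) you supply the convexity argument for uniqueness of $\phi\in[1,k]$, which the paper states without proof. In short: same approach, but your version is more self-contained.
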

\begin{remark}
The quantity $\phi(k,P)$ is the power gain compared to  the no feedback sum capacity $1/2 \log (1+P)$. This power gain can be interpreted as the amount of cooperation among the receivers established through feedback, which allows the transmitter to align signals intended for different receivers coherently and use power more efficiently. 
\end{remark}

\begin{IEEEproof}
For $A$ defined as~\eqref{symmetricA}, we know by Theorem~\ref{thmLQGanalysis} that the sum rate $R=k\log a$ is achievable under power constraint $P=\tr(G)$ where $G$ is the unique solution to the DARE~\eqref{riccati}. The following lemma characterizes the solution to the DARE~\eqref{riccati} for the symmetric choice $A$. 

\begin{lem} \label{lemsym} \cite[Lemma 12]{Ehsan}
\label{symlem}
Suppose that the open-loop matrix $A$ is of the form~\eqref{symmetricA}.
Then the unique positive-definite solution $G$ to (\ref{riccati}) is circulant with real eigenvalues $\lambda_1>\lambda_2>\ldots>\lambda_k>0$ satisfying $$\lambda_i=\frac{1}{a^2} \lambda_{i-1}$$ for $i=2,\ldots, k$. The largest eigenvalue $\l_1$ satisfies
\begin{align}
 1+k\lambda_1 &= a^{2k}  \label{symsum} \\
\Big(1+\lambda_1\big(k-\frac{\lambda_1}{G_{jj}}\big)\Big) &= a^{2(k-1)}. \label{symother}
\end{align}
\end{lem}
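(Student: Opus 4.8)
The plan is to diagonalize the DARE~\eqref{riccati} in the discrete Fourier basis, exhibit a circulant solution in closed form, and then invoke the uniqueness of the stabilizing positive-definite solution recorded in Lemma~\ref{lemLQG}. Write $\omega=e^{2\pi\sqrt{-1}/k}$, so that $A=aD$ with $D=\diag(1,\omega,\dots,\omega^{k-1})$, and let $v_1,\dots,v_k$ be the orthonormal Fourier vectors with entries $(v_m)_\ell=\omega^{(\ell-1)(m-1)}/\sqrt{k}$, all indices read mod $k$. Two elementary identities carry the argument: $Dv_m=v_{m+1}$, whence $Av_m=a\,v_{m+1}$ and $A'v_m=a\,v_{m-1}$; and $B=\1=\sqrt{k}\,v_1$.

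First I would look for a circulant $G$, that is, one diagonal in the Fourier basis, $Gv_m=\lambda_m v_m$ with $\lambda_m\in\Real$. Then $A'GA$ is also diagonal in that basis, $A'GA\,v_m=a^2\lambda_{m+1}v_m$, while $A'GB=a\sqrt{k}\,\lambda_1 v_k$, $B'GB+1=1+k\lambda_1$, and $B'GA=(A'GB)^{*}$, so the rank-one term in~\eqref{riccati} equals $\tfrac{ka^2\lambda_1^2}{1+k\lambda_1}\,v_kv_k^{*}$ and touches only the $v_k$-coordinate. Matching the $v_m$-coefficients on the two sides of~\eqref{riccati} then splits into $\lambda_m=a^2\lambda_{m+1}$ for $m=1,\dots,k-1$ — forcing $\lambda_m=\lambda_1 a^{-2(m-1)}$ — together with the single closing equation $\lambda_k=a^2\lambda_1-\tfrac{ka^2\lambda_1^2}{1+k\lambda_1}$. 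Combining this with $\lambda_k=\lambda_1 a^{-2(k-1)}$ and dividing through by $\lambda_1$ collapses everything to $1+k\lambda_1=a^{2k}$, which is~\eqref{symsum}; in particular $\lambda_1=(a^{2k}-1)/k>0$ since $a>1$, so the $\lambda_m$ are positive, strictly decreasing, and satisfy $\lambda_i=\lambda_{i-1}/a^2$, and $G\succ0$. For~\eqref{symother} I would use that a circulant matrix has constant diagonal $G_{jj}=\tfrac1k\sum_m\lambda_m=\tfrac{\lambda_1}{k}\cdot\tfrac{1-a^{-2k}}{1-a^{-2}}$; substituting this together with $1+k\lambda_1=a^{2k}$ into $1+\lambda_1\bigl(k-\lambda_1/G_{jj}\bigr)$ is a one-line geometric-series computation that returns $a^{2(k-1)}$.

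It then remains to verify that the circulant $G$ just constructed is the \emph{stabilizing} solution, so that Lemma~\ref{lemLQG} (via~\cite[Lemma 2.4]{YH}) identifies it as the unique positive-definite one. From $C=(B'GB+1)^{-1}B'GA=\tfrac{\sqrt{k}\,\lambda_1}{a^{2k-1}}\,v_k^{*}$ one obtains $A-BC=aD-\tfrac{k\lambda_1}{a^{2k-1}}\,v_1v_k^{*}$, which maps $v_m\mapsto a\,v_{m+1}$ for $m=1,\dots,k-1$ and, using $k\lambda_1=a^{2k}-1$, maps $v_k\mapsto\bigl(a-\tfrac{k\lambda_1}{a^{2k-1}}\bigr)v_1=a^{1-2k}v_1$. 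Hence $A-BC$ acts as a weighted $k$-cycle on $\{v_1,\dots,v_k\}$ with weight product $a^{k-1}\cdot a^{1-2k}=a^{-k}$, so $(A-BC)^k=a^{-k}I$ and every eigenvalue of $A-BC$ has modulus $a^{-1}<1$; thus $A-BC$ is stable and the proof is complete.

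The step I expect to be the main obstacle is legitimizing the circulant ansatz. The pair $(A,B)$ admits no helpful unitary symmetry to invoke a priori: $U^{*}AU=A$ forces $U$ to be diagonal (the $a_j$ are distinct), and then $U^{*}B=B$ forces $U=I$, so one cannot deduce beforehand that the solution must be circulant. The remedy is precisely the guess-and-verify route above, in which a circulant $G$ is produced explicitly, checked to satisfy~\eqref{riccati} with $A-BC$ stable, and then pinned down by the uniqueness clause of Lemma~\ref{lemLQG}. Everything else is the bookkeeping of evaluating $A'GA$, $A'GB$ and $A-BC$ on the Fourier basis, which the two identities $Dv_m=v_{m+1}$ and $B=\sqrt{k}\,v_1$ reduce to routine algebra.
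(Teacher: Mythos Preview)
Your proof is correct. The paper does not supply its own proof of this lemma; it is quoted verbatim from \cite[Lemma~12]{Ehsan} and used as a black box inside the proof of Theorem~\ref{BCthmsym}. So there is no in-paper argument to compare against, and your guess-and-verify derivation---diagonalizing the DARE in the Fourier basis, reading off the recursion $\lambda_m=a^2\lambda_{m+1}$ and the closing relation $1+k\lambda_1=a^{2k}$, then confirming stability of $A-BC$ via the weighted $k$-cycle structure to invoke the uniqueness clause in Lemma~\ref{lemLQG}---is a complete and self-contained substitute for the external citation. Your remark that no unitary symmetry of $(A,B)$ forces $G$ to be circulant a priori, so that the ansatz must be justified a posteriori through uniqueness of the stabilizing solution, is exactly the right way to close the logical loop.
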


From~\eqref{symsum} and~\eqref{symother} we have
\begin{align}\label{symsumother}
(1+k\l_1)^{k-1}=\left(1+\l_1 (k-\frac{\l_1}{G_{jj}})\right)^k.
\end{align} 
The solution to~\eqref{riccati} is unique and we conclude that~\eqref{symsumother} has a unique solution for $G_{jj}$ given $\l_1$ and vice-versa. Consider $\l_1(k,P)$ corresponding to the case where 
\[G_{jj}=\frac{P}{k}, \quad j=1,\ldots,k.\] 
Note that the solution $G$ to~\eqref{riccati} is circulant and has equal elements on the diagonal. From~\eqref{symsum} we know the LQG code corresponding to $\l_1(k,P)$ achieves sum rate
\[k\log a= \half \log(1+ k \l_1(k,P)).\]
under power constraint $\tr(G)=P$. The following change of variable 
\[\phi=\frac{k}{P} \l_1\]
completes the proof. 
\end{IEEEproof}

\subsection{Comparison with the AWGN Multiple Access Channel}
The LQG approach can be also applied to the AWGN-MAC with feedback. It is known~\cite{EhsanControl} that the LQG code for AWGN-MAC has the same performance as the Kramer code~\cite{KramerFeedback}, which achieves the linear sum capacity~\cite{Ehsan}, the supremum sum rate achievable by linear codes. Let $R_{ \tx{\footnotesize MAC}}(k,P)$ denotes the symmetric sum rate achievable by the LQG code for the $k$-sender AWGN multiple access channel (MAC) with feedback where each sender has power constraint $P$. Then, we have~{\cite[Theorem~4]{EhsanControl}}:
\[R_{\tx{\footnotesize MAC}}(k,P) = \half \log (1+kP\phi)\]
where $\phi(k,P)$ is the unique solution to
\[(1+kP\phi)^{k-1}=\left(1+P\phi (k-\phi)\right)^k.\] 
Comparing with Theorem~\ref{BCthmsym}, it is not hard to see that
\[R_{\tx{\footnotesize BC}}(k,P)=R_{\tx{\footnotesize MAC}}(k,P/k).\]
This shows that under the same {\it sum power} constraint $P$, the sum rate achievable by the LQG code over MAC and BC are equal. This connection between the MAC and the BC is also considered in~\cite{Sriram}. 
\subsection{Ozarow-Leung (OL) code for $k=2$}
We compare the LQG code with the OL code for $k=2$ and $K_z=I$. The OL code can be represented as follows~\cite{YH--lecture}:
\[X_{i} = S_{1i}+ S_{2i}\]
where
\begin{align}~\label{Ozarowform}
\begin{bmatrix} 
             S_1  \\
             S_2 \\ 
\end{bmatrix}_{i+1}
& =
\begin{bmatrix} 
             a & 0  \\
             0 & -a \\ 
\end{bmatrix}
\left( 
\begin{bmatrix} 
             S_1  \\
             S_2 \\ 
\end{bmatrix}_{i}
- 
\begin{bmatrix} 
             \frac{\E[S_{1i}Y_{1i}]}{\E[Y_{1i}^2]}  & 0  \\
             0 & \frac{\E[S_{2i}Y_{2i}]}{\E[Y_{2i}^2]} \\ 
\end{bmatrix}
\begin{bmatrix} 
             Y_1  \\
             Y_2 \\ 
\end{bmatrix}_{i}
\right)
\end{align}
In the sequel, we present the LQG code in a similar form as~\eqref{Ozarowform} for comparison. Let the system in \eqref{BCsystem} be re-written as 
\[\Sv_{i}=A\Sv_{i-1}+ \diag(\Bt) \Yv_{i-1}.\] 
where \begin{align}
A = \begin{bmatrix} 
             a & 0  \\
             0 & -a \\ 
\end{bmatrix}, \quad  
\Bt = \begin{bmatrix} 
             -b  \\
             b \\ 
\end{bmatrix}
\label{bc1}
\end{align}
where $b \neq 0$ is any constant and $a > 1$. The choice of $b$ does not affect the performance of the optimal control as one can cancel out the effect of $b$ by properly scaling the control signal. Thus, without loss of generality, in~\eqref{BCsystem} we picked $b=1$. According to the channel~\eqref{channel} the closed-loop system is
\[\Sv_{i}=A\Sv_{i-1}+\Bt X_{i-1}+\diag(\Bt) \Zv_{i-1}.\] 
By substituting $B$ with $\Bt$, the optimal control can be characterized by Lemma~\ref{lemLQG} as follows. The solution to the DARE~\eqref{riccati} is 
\begin{align*}
G = - \frac{(a^4-1)}{4 a^4 b^2} \begin{bmatrix} 
             1+a^2 & 1-a^2 \\  1-a^2 & 1+a^2
\end{bmatrix}
\end{align*}
which yields the optimal control
\begin{align*}
C = [c_1 \ c_2]=- \frac{(a^4-1)}{2 a^3 b} \begin{bmatrix} 
             1 & 1\\
\end{bmatrix}.
\end{align*}
To obtain the power we need to substitute $K_z=I$ with 
\begin{align*}
Q = \diag(\Bt) \diag(\Bt)'
=\begin{bmatrix} 
             b^2 & 0 \\
             0 & b^2 \\ 
\end{bmatrix} .
\end{align*}
and the asymptotic variance of the channel input $X$ is given by 
\begin{align}
P = \tr(GQ) = b^2 \tr(G) = \frac{1}{2a^2} (a^2-1) (a^2 + 1)^2.
\label{bcx}
\end{align}
Notice that  \eqref{bcx} does not depend on the parameter $b$. Thus, we can choose $b$ arbitrarily without affecting the overall performance of the code.  In particular, by choosing 
\[b = \frac{a^4-1}{2a^3}\] 
we have that $X_{i} = S_{1i}+ S_{2i}$ as in the OL code. However, the state in OL is updated as in~\eqref{Ozarowform} while in the LQG code,
\begin{align}\label{LQGform}
\begin{bmatrix} 
             S_1  \\
             S_2 \\ 
\end{bmatrix}_{i+1}
& =
\begin{bmatrix} 
             a & 0  \\
             0 & -a \\ 
\end{bmatrix}
\left( 
\begin{bmatrix} 
             S_1  \\
             S_2 \\ 
\end{bmatrix}_{i}
- 
\begin{bmatrix} 
             b/a & 0  \\
             0 & b/a \\ 
\end{bmatrix}
\begin{bmatrix} 
             Y_1  \\
             Y_2 \\ 
\end{bmatrix}_{i}
\right).
\end{align}
To compare with the OL code, we need to find the asymptotic covariance matrix of the state. By substituting $K_z$ with $Q$, the asymptotic covariance matrix $K_s$ is given by the DALE~\eqref{Lyapunov},
\begin{align*}
K_s = \frac{b^2}{2 a^2} 
\begin{bmatrix} 
             (a^2+1)^2+\frac{2}{a^2-1} & a^4+1 \\  a^4+1 & (a^2+1)^2+\frac{2}{a^2-1} \\
\end{bmatrix}
\end{align*}
and
\begin{align*}
\lim_{i \to \infty } \frac{\E[S_{2i}Y_{2i}]}{\E[Y_{2i}^2]}  &= \lim_{i \to \infty} \frac{\E[S_{1i}Y_{1i}]}{\E[Y_{1i}^2]}  \\
&= \frac{-c_1 (K_s)_{11} - c_2 (K_s)_{12} }{ \tr(GQ) +1 } \\
&= b \cdot \frac{a^3 (a^2-1)}{a^6 + a^4 + a^2 -1}\cdot 
\end{align*}
Notice that
\begin{align*}
b\cdot \frac{a^3 (a^2-1)}{a^6 + a^4 + a^2 -1} < \frac{b}{a}.
\end{align*}
Therefore, unlike in the point-to-point setting discussed in Section~\ref{SectionBCP2P}, here the OL code and the LQG code are not asymptotically equivalent. Although both codes achieve rate pair $(\log a, \log a)$, by Lemma~\ref{lemLQG}, the OL code requires more asymptotic power than the LQG code and hence it is suboptimal.



\section{Correlated Noises: Pre-Log Gain}\label{SectionBCDoF}
In this section, we show that structured correlation among the noises at the receivers can increase the capacity significantly. We consider the high power regime and study the pre-log $\gamma(K_z)$ as a function of covariance matrix $K_z$, which represents the number of orthogonal point-to-point channels with the same sum capacity. 

\begin{thm}\label{prelog}
For all $K_z$ of rank $r$
\[\gamma(K_z) \leq k-r+1.\]
Conversely, for any $r=1,\ldots,k$, there exists $K_z$ such that $\rank(K_z)=r$ and 
\[\gamma(K_z)=k-r+1.\]
\end{thm}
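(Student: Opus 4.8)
The plan is to prove the two parts separately. For the converse (upper bound $\gamma(K_z) \le k-r+1$), the natural approach is a cut-set / genie argument. Since $K_z$ has rank $r$, the noise vector $\Zv_i$ lives in an $r$-dimensional subspace; equivalently there is an $(k-r)$-dimensional subspace of directions along which the noises are perfectly correlated (deterministic functions of one another). Group the $k$ receivers: give a genie that reveals to a single "super-receiver" the outputs of $r-1$ of the receivers for free, together with the transmitted signal's interaction; what remains is effectively $k-r+1$ "independent" noise coordinates. More precisely, I would upper bound $\sum_j R_j$ by considering the sum-rate to all $k$ receivers and noting that, after a linear change of coordinates diagonalizing $K_z$, only $r$ of the coordinates carry fresh randomness. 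The total information the sender can convey is then bounded by the capacity of a point-to-point channel where the sender sees, through feedback, a cleaned-up version; the key inequality is that $\frac1n\sum_{i} I(\Mv; \Yv_i \mid \Yv^{i-1})$ telescopes and each term is at most $\frac12 \log(1 + P \cdot (\text{something} \le k-r+1))$ in the high-SNR limit, because the $k$-dimensional output cone collapses onto an $r$-dimensional noise plus a rank-one signal contribution, so at most $k - r + 1$ degrees of freedom survive. I would cite the $k=2$ version in Gastpar--Wigger~\cite{Wigger} as the template and argue the general-$k$ extension is a routine rank bookkeeping exercise on the covariance of $(\Yv^{i-1}, \Yv_i)$.

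For the achievability (existence of a rank-$r$ matrix $K_z$ with $\gamma(K_z) = k-r+1$), I would invoke Theorem~\ref{thmLQGanalysis}: the LQG code achieves the rate vector $(\log|a_1|, \ldots, \log|a_k|)$ under power $P(A,K_z) = \tr(G K_z)$, where $G$ solves the DARE~\eqref{riccati} (which depends only on $A$ and $B$, not on $K_z$). The idea is to \emph{choose} $K_z$ of rank $r$ cleverly so that $\tr(G K_z)$ grows only like a constant (in fact, $\to 0$ relative to the product of the $|a_j|$) as the $|a_j| \to \infty$ along a suitable path, which forces the sum rate $\sum_j \log|a_j|$ to grow like $\frac{k-r+1}{2}\log(1+P)$. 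Concretely: since $G \succ 0$, write its spectral decomposition; pick $K_z$ supported (as a quadratic form) mostly on the $r$ \emph{smallest} eigendirections of $G$, scaled to keep $\tr(G K_z)$ controlled while the diagonal of $K_z$ stays equal to one (the normalization from Section~\ref{SectionBCdef}). The degrees-of-freedom count $k-r+1$ should emerge because one can let $k-r+1$ of the modes $|a_j|$ blow up "for free" (their contribution to $\tr(G K_z)$ is suppressed by the chosen eigenstructure) while the remaining $r-1$ modes must stay bounded. I would make this quantitative by taking $|a_j| = a$ for $j$ in a block of size $k-r+1$ and $|a_j| = O(1)$ otherwise, computing $G$ (using Lemma~\ref{lemsym} or a block version of it where possible), and checking $\tr(G K_z) = \Theta(a^2)$ so that $\sum_j\log|a_j| = (k-r+1)\log a = \frac{k-r+1}{2}\log(1+\Theta(P))$, giving $\gamma = k-r+1$ in the limit.

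The main obstacle, I expect, is the achievability side: one must exhibit an \emph{explicit} rank-$r$ covariance matrix $K_z$ (with unit diagonal) together with a family of open-loop matrices $A$ such that the DARE solution $G = G(A)$ pairs with $K_z$ to give $\tr(G K_z)$ of the right order. Solving the DARE~\eqref{riccati} in closed form for general $A$ is hard; the trick will be to reverse-engineer — first pick the desired eigenstructure of $G$ (real, with a prescribed block of large eigenvalues and the rest small), then recover the corresponding $A$ from the DARE (which, given $G$ and $B$, determines $A'GA$ and hence constrains $A$), and finally read off $K_z$ from the unit-diagonal normalization applied in the appropriate basis. Verifying that $A-BC$ remains stable (so the code is valid) and that $K_z \succeq 0$ with exactly rank $r$ throughout the limiting family is the delicate bookkeeping. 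For the converse, the only subtlety is making the genie argument airtight with feedback present — one must be careful that revealing side information to the super-receiver does not inadvertently help the \emph{encoder} beyond what feedback already provides; the standard fix is to bound the sum rate by the rate of a single enhanced receiver that observes all $k$ outputs, for which feedback is already in the model, and then apply the rank bound to the resulting Gaussian MIMO feedback channel.
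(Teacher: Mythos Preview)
Your converse has a real gap. The ``standard fix'' you propose at the end --- bounding the sum rate by the capacity of a single enhanced receiver that sees all $k$ outputs --- is the SIMO cut-set bound, and for rank-deficient $K_z$ this bound is typically vacuous: whenever $\1$ is not in the range of $K_z$ (which is exactly the interesting case, e.g.\ the rank-one circulant example), the SIMO capacity is infinite. Likewise, the telescoping quantity $\tfrac{1}{n}\sum_i I(\Mv;\Yv_i\mid \Yv^{i-1}) = \tfrac{1}{n}I(\Mv;\Yv^n)$ does not upper-bound the BC sum rate by anything finer than the same SIMO capacity. The paper's argument is simpler and avoids this: after grouping $r$ receivers whose noise rows are linearly independent into one super-receiver (so its $r\times r$ noise covariance $\tilde K_z$ is invertible), it bounds \emph{each} of the resulting $k-r+1$ rates individually by its marginal point-to-point capacity --- $\tfrac{1}{2}\log(1+P)$ for the $k-r$ scalar receivers (unit noise variance by normalization) and $\tfrac{1}{2}\log(1+P\,\1'\tilde K_z^{-1}\1)$ for the super-receiver --- and sums. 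Since feedback does not increase point-to-point Gaussian capacity, each bound is valid with feedback present, and the pre-log of the sum is $k-r+1$.

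On achievability, your core idea --- choose $K_z$ supported on the small-eigenvalue eigenspace of $G$ --- is exactly what the paper does for $r=1$, but the execution there is cleaner than your reverse-engineering plan. Rather than taking asymmetric $|a_j|$ and trying to solve the DARE in closed form (which is genuinely hard), the paper fixes the symmetric $A$ of~\eqref{symmetricA} so that Lemma~\ref{lemsym} gives the DARE solution explicitly: $G$ is circulant with eigenvalues $\lambda_i=\lambda_1/a^{2(i-1)}$ in the DFT basis. Taking $K_z$ circulant of rank one, supported on the eigenvector of $\lambda_k$ (with unit diagonal, so $\tilde K_z=F\diag(0,\ldots,0,k)F'$), yields $P=\tr(G\tilde K_z)=k\lambda_k=(a^{2k}-1)/a^{2(k-1)}\sim a^2$, while the sum rate is $k\log a$, giving $\gamma=k$. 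For general $r$, the paper does \emph{not} vary the $|a_j|$ either; it simply takes a block-diagonal $K_z$ (the rank-one circulant on a $(k{-}r{+}1)$-block, identity on the rest), sets the rates to the last $r-1$ receivers to zero, and invokes the $r=1$ case on the sub-channel. Your plan of letting some $|a_j|=O(1)$ and others blow up would require analyzing the DARE for non-symmetric $A$, which the paper deliberately avoids.
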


\begin{IEEEproof}
First, we prove the upper bound by induction. By assumption $K_z$ contains $r$ linearly independent rows, let us assume, without loss of generality, the last $r$ rows. Assume that receivers $k-r+1, \cdots, k$ share their received signals and form a single receiver equipped with $r$ receive antennas and let $\Yv_{k-r+1}:=(Y_{k-r+1}, \ldots, Y_k)^T$ denote the vector of received signals by this multiple antenna receiver. The corresponding AWGN vector BC with feedback is specified by
\begin{align*}
Y_j & = X + Z_j, \qquad j=1,\cdots,k-r, \\
\Yv_{k-r+1} & = \1_{r \times 1} X + \Zv_{k-r+1}
\end{align*}
where $(Z_1,\cdots,Z_{k-r},\Zv_{k-r+1}) \sim \N \left(0, K_z \right)$, $\Zv_{k-r+1}  \sim \N(0, \tilde{K}_z)$, and by assumption $\Kt_z$ is full rank and invertible. 

Now suppose that the sender of this channel wishes to send message $M_j$ to receiver $j$, $j=1,\cdots, k-r+1$, under power constraint $P$.
Since we made the optimistic assumption that a subset of receiver can cooperate, the sum capacity of this channel is an outer bound on the sum capacity of the original AWGN-BC. Note that for every $j=1,\cdots,k-r$, the rate
\[R_j <  \half \log (1+P)\]
is upper bounded by the capacity of the point-to-point AWGN channel.
The rate $R_{k-r+1}$ for the $(k-r+1)$-th receiver with $r$ multiple antenna is upper bounded by the capacity of a single input multiple output (SIMO)~\cite{Tse} channel:      
\begin{align*}
R_{k-r+1} & \le  \half \log(1+P |\tilde{K}^{-\frac{1}{2}} \1_{r\times 1} |^2 ) 
\end{align*}
where by assumption $\Kt$ is invertible. Thus, the sum capacity of this channel is upper bounded by $(k-r)/2 \log (1+P) + 1/2\log (1+P |\tilde{K}^{-\frac{1}{2}} \1_{r \times 1} |^2 )$, and therefore the pre-log $\gamma(K_z)$ can be at most $k-r+1$.



Next, we show $\gamma(K_z) = k$ is achievable by the LQG code for some $K_z$ of rank one, i.e., $r=1$. 
For $r=2,\ldots,k$ similar argument holds. Suppose that the open-loop matrix~$A$ is as in~\eqref{symmetricA}. By Theorem~\ref{thmLQGanalysis}, 
the symmetric rate vector $(\log a,\ldots,\log a)$ is achievable under the power constraint $\tr(GK_z)$, where $G$ is the circulant matrix in Lemma~\ref{lemsym}. Note that any circulant matrix can be written as $F\Lambda F'$, where $F$ is the $k$ point discrete Fourier transform matrix with
\begin{align*}
F_{jl}=\frac{1}{\sqrt{k}}e^{-2\pi \sqrt{-1} (j-1)(l-1)/k},
\end{align*}
for $j,k \in\{1,\cdots,k\}$, and $\Lambda=\mbox{diag}([\lambda_1, \ldots, \lambda_k])$ is the matrix with eigenvalues on its diagonal. Suppose that the noise covariance matrix is also a circulant matrix with diagonal entries equal to $1$. In particular, let 
\begin{align}\label{specialcov}
\Kt_z=F\Lamt F' , \quad  \Lamt=\mbox{diag}([0, \ldots 0, k]).
\end{align} 
Then, we have
\begin{align*}
\tr(G\Kt_z)&=\tr(F\Lambda \Lamt F')=k\lambda_k
\end{align*}
where by Lemma~\ref{lemsym},
\[ k\l_k=\frac{k \lambda_1}{a^{2(k-1)} } = \frac{a^{2k}-1}{a^{2(k-1)}}.\]
Therefore, for the symmetric choice of $A$ in~\eqref{symmetricA}, the LQG code achieves sum rate 
\begin{align}\label{DoFR}
R=k\log a
\end{align}
under power constraint 
\begin{align}\label{DoFP}
P=\tr(G\Kt_z)=\frac{a^{2k}-1}{a^{2(k-1)}}
\end{align}
and we have
\begin{align}
\gamma(\Kt_z)&= \lim_{P \to \infty} \frac{C(P,\Kt_z)}{\frac{1}{2} \log (1 + P )} \nn \\
&\geq \lim_{P \to \infty} \frac{R}{\frac{1}{2} \log (1 + P )} \nonumber\\
&=\lim_{a \to \infty} \frac{k \log a}{ \half \log a^2}  \label{DoFplugP}\\
&=  k. \nn
\end{align}
where~\eqref{DoFplugP} follows by plugging sum rate $R$ and power $P$ from~\eqref{DoFR} and~\eqref{DoFP}. Moreover, we know $\gamma(\Kt_z) \leq k$ since $\rank(K_z)=1$. Hence, $\gamma(\Kt_z)=k$ for the covariance matrix $\Kt_z$ in~\eqref{specialcov}.

To complete the proof, we show that  for every $r \in \{2,\ldots, k-1\}$ a pre-log equal to $k-r+1$ is achievable for some $K_z$ such that $\rank{K}_z=r$. Consider
$$
 K_z = \begin{bmatrix}
M_{ k-r+1, k-r+1 }   &  0_{ k-r+1, r-1 }  \\   
0_{ r-1, k-r+1 }    & I_{ r-1, r-1 } 
 \end{bmatrix}
$$
where $0_{i,j}$ denotes the zero matrix of dimension $i \times j$, $I_{ i,i }$ is the identity matrix of dimension $i$, and $M_{ k-r+1, k-r+1 }$  is the $(k-r+1) \times (k-r+1)$ circulant matrix having first row equal to the last column of the discrete Fourier transform matrix of dimension $(k-r+1) \times (k-r+1)$.  Clearly, $\rank( K_z ) = \rank( M_{k-r+1,k-r+1} ) + \rank( I_{r-1,r-1} ) = r$. On the other hand, suppose that the transmitter communicates only to users $1,2, \cdots , k-r+1$, while the transmission rate for the remaining users is set to zero. 
We can use a similar argument as above and show that the LQG code for the corresponding $(k-r+1)$-receiver AWGN-BC with feedback and noise covariance matrix $M_{ k-r+1, k-r+1 }$ achieves a pre-log equal to $k-r+1$. 
\end{IEEEproof}
\begin{remark}
To achieve $\gamma=k$, we used the LQG code. However, the same pre-log can be achieved even with codes which are less power efficient since we are considering only the pre-log of the sum rate in the high power regime. For instance, for the special case of $k=2$, Gastpar and Wigger~\cite{Wigger} showed that the OL code, which is suboptimal, achieves pre-log two for anti-correlated noises. 
\end{remark}


\section{Conclusion}~\label{SectionBCcon}
Using tools from control theory we have presented a code for the $k$-user AWGN-BC with feedback, called the LQG code, which we have then used to investigate some properties of the capacity region of this channel. 
When the noises at the receivers are independent the pre-log of the sum capacity is at most one, so feedback can yield at most a power gain over the case without feedback. We have quantified the power gain achieved by the LQG code and shown that in the case where $k=2$, the LQG code recovers a previous result of Elia which strictly improves upon the OL code. In the case where the noises at the receivers are correlated, instead, the pre-log of the sum capacity can be strictly greater than one. We established that for all noise covariance matrixes of rank $r$ the pre-log is at most $k-r+1$ and, conversely, there exists a covariance matrix for which this upper bound is achieved by the LQG code. In particular, a pre-log equal to $k$ is achievable for some circulant noise covariance matrix of rank one. This generalizes previous results obtained by Gastpar and Wigger for the case $k=2$. 

The LQG approach exploited here could be in principle useful for other multi-user communication channels with feedback, when the subclass of linear codes can lead to optimal or close to optimal solutions. 

\appendices

\section{Proof of Lemma~\ref{lemMSE}}\label{applemMSE}
The proof is closely related to the proof of an analogous statement for the communication problem over the multiple access channel with feedback~\cite{EhsanControl} . 
By assumption, there exists a sequence of $n$-codes for $\Theta_j \sim \U(0,1)$, such that 
\begin{align}\label{appMSEdef}
E_j = \lim_{n \to \infty} -\frac{1}{2n}\log \Dn_j, \quad j=1,\ldots, k 
\end{align}
and~\eqref{avP} holds. 
Given the sequence of $n$-codes and $R_j < E_j$,  $j=1\ldots,k$, we construct a sequence of $(2^{nR_1},\ldots, 2^{nR_k}, n)$ codes such that $\lim_{n \to \infty} \pen=0$. 

First, we map the discrete message $m_j \in \Mc_j=[1:2^{nR_j}]$ to a message point 
$\theta_j(m_j) \in \mathbf{\Theta}_j $, 
where $\mathbf{\Theta}_j$ is a set of $2^{nR_j}$ message points in the unit interval such that the distance between any two message points is greater than or equal to $2^{-nR_j}$. To send $m_j \in \Mc_j$, we use the given $n$-code and the corresponding message point $\theta_j (m_j)$. The decoder first forms the estimate of the message point $\thetah_j(y^n)$ according to the given $n$-code, and then chooses $\mh_j$ such that $\theta_j(\mh_j)$ is the closest message point to $\thetah_j(y^n)$. As the distance between any two message points is greater than or equal to $2^{-nR_j}$, the average probability of error is bounded as follows, 
\begin{align}\label{avebymax}
\pen \leq \max_{j} \max_{\theta\in \mathbf{\Theta}_j} \P\left\{ |\Theta_j- \Thetah_j| > \frac{1}{2} \cdot 2^{-nR_j} \big| \Theta_j=\theta\right\}.
\end{align}
To show $\lim_{n \to \infty} \pen=0$, consider
\begin{align}
p_{j,n}:=&\P\left\{|\Theta_j-\Thetah_j| >\half \cdot 2^{-nR_j}\right\} \\ 
&\leq 4\cdot2^{2nR_j}\cdot \Dn_j \label{Che} \\
& \leq 4\cdot2^{2nR_j}\cdot 2^{-2n(E_j-\e_n)} \label{Edefref} \\
&=4\cdot 2^{-2n(E_j-R_j-\e_n)} \label{RlessE}
\end{align}
where $\e_n \to 0$ as $n \to \infty$. The inequalities~\eqref{Che} and~\eqref{Edefref} follow from the Chebyshev inequality and~\eqref{appMSEdef}, respectively. From~\eqref{RlessE} and the assumption $R_j < E_j$, we have 
\begin{align}\label{intervaldec}
p_{j,n} \to 0 \ \tx{as} \ n \to 0 \quad j=1,\ldots,k.
\end{align}



Next, by the similar argument as in~\cite[Lemma II.3]{Oferarxiv} we show that condition~\eqref{intervaldec} is sufficient to prove that there exists a set of message points in the unit interval such that the distance between any two message points is greater than or equal to $2^{-nR_j}$ and 
\begin{align}\label{maxerror}
\lim_{n \to \infty} \max_{\theta_j \in \mathbf{\Theta}_j} \P\left\{|\Theta_j- \Thetah_j| > \frac{1}{2} \cdot 2^{-nR_j} \Big| \Theta_j=\theta_j \right\}=0
\end{align}
for $j=1,\ldots,k$. 

Define the event 
\begin{align*}
T_{j,n}=\Big\{\theta \in &(0,1): \\
& \P \Big\{|\Theta_j- \Thetah_j| > \frac{1}{2} \cdot 2^{-nR_j} \big|\Theta_j=\theta\Big\}> \sqrt{p_{j,n}}\Big\}.
\end{align*}
Then we have $p_{j,n} > \sqrt{p_{j,n}} \P(T_{j,n})$ and hence 
\[\P(T_{j,n}) < \sqrt{p_{j,n}}.\] 
To choose $\mathbf{\Theta_j}$ such that $\mathbf{\Theta}_j  \cap T_{j,n} =\emptyset$ and also the distance between any two message points is greater than or equal to $2^{-nR_j}$, it is sufficient that $|\mathbf{\Theta}_j| 2^{-nR_j} \leq 1-\sqrt{p_{j,n}}$ or considering~\eqref{intervaldec}, 
\[|\mathbf{\Theta}_j| \leq (1-\e_n) \cdot 2^{nR_j}\]
where $\e_n \to 0$ as $n \to \infty$. Moreover, by the definition of $T_{j,n}$ and the fact that $\mathbf{\Theta}_j \cap T_n= \emptyset$ we have
\begin{align}
\max_{\theta \in \mathbf{\Theta}_j} \P\left\{|\Theta- \Thetah| > \frac{1}{2} \cdot 2^{-nR_j} \Big| \Theta_j=\theta\right\} \leq \sqrt{p_{j,n}}
\end{align}
and considering~\eqref{intervaldec}, the condition~\eqref{maxerror} holds.  

Combining~\eqref{avebymax} and~\eqref{maxerror}, we have $\lim_{n \to \infty} \pen = 0$. Moreover, since the given $n$-code satisfies the power constraint~$P$, the constructed $(2^{nR_1},\ldots, 2^{nR_k}, n)$ code also satisfies the same power constraint. Hence, we conclude that the rate vector $(R_1,\ldots, R_k)$ is achievable under power constraint~$P$. 




\section{Proof of Lemma~\ref{MSEach}}\label{applemMSEach}
Let $\Shv_i=(\Sh_{1i}, \ldots, \Sh_{ki})^T$ where $\Sh_{ji}$ is given in~\eqref{BCdecoder}. Then, we have   
\begin{align*}
\Shv_1&=0 \nonumber \\ 
\Shv_{i}&=A\Shv_{i-1}+\Yv_{i-1} \quad i=2,3,\ldots
\end{align*}
where $A=\diag(a_1,\ldots,a_k)$ is the same as in~\eqref{BCAform}. Considering the recursion for $\Shv_i$, we can rewrite the system dynamics given in~(\ref{BCsystem}) as
\begin{align}
\Sv_{i+1}&=A^i \Thetav+\Shv_{i+1} \nonumber\\
    &=A^i (\Thetav - \hat{\Thetav}_i) \label{BCMSEproof}
\end{align}
where $\hat{\Thetav}_i=(\Thetah_{1i},\ldots,\Thetah_{ki})^T$ and the last equality follows from the decoder rule by which $\Thetah_{ji}= -a^i \Sh_{j(i+1)}$.
From~\eqref{BCMSEproof}, the MSE for the message $\Theta_{j}$ at time~$n$ is given by 
\begin{align}
\Dn_j&=\E(\Theta_j-\Thetah_{jn})^2 = |a_j|^{-2n} (K_{n+1})_{jj}. \label{Ejbeta}
\end{align}
where $K_n:=\mbox{Cov}(\Sv_n)$ is the covariance matrix of $\Sv_n$. The achievability of MSE exponent $E_j=\log(|a_j|)$ follows from~\eqref{Ejbeta} and the assumption of stability $\limsup_{n \to \infty}(K_n)_{jj} < \infty$. The asymptotic power follows from the fact that $X_i=\pi_i(\Sv_i)$.

\bibliographystyle{IEEEtran}
\bibliography{bibliography}

\end{document}